\DeclareMathOperator{\Var}{Var}
\newtheorem{lemma}{\textbf{Lemma}}
\newtheorem{theorem}{\textbf{Theorem}}
\newtheorem{corollary}{\textbf{Corollary}}
\newcommand{\overbar}[1]{\mkern 1.5mu\overline{\mkern-1.5mu#1\mkern-1.5mu}\mkern 1.5mu}
\author{
\IEEEauthorblockN{Daqian Ding, Haorui Li, Yibo Pi, \emph{Member, IEEE}, and Xudong Wang, \emph{Fellow, IEEE}}
\thanks{An earlier version of this paper has been accepted to the IEEE PIMRC 2024~\cite{mn_ige} and will be presented in Spain in September, 2024.}
\thanks{The authors are with the UM-SJTU Joint Institute, Shanghai Jiao Tong
University, Shanghai 200240, China (email: $\{$daqian.ding, haorui.li, yibo.pi$\}$@sjtu.edu.cn, wxudong@ieee.org).}
}
\begin{document}

\title{Interference Graph Estimation for Resource Allocation in Multi-Cell Multi-Numerology Networks: A Power-Domain Approach
}



\maketitle

\begin{abstract}
The interference graph, depicting the intra- and inter-cell interference channel gains, is indispensable for resource allocation in multi-cell networks. 
However, there lacks viable methods of interference graph estimation (IGE) for multi-cell multi-numerology (MN) networks. To fill this gap, we propose an efficient power-domain approach to IGE for the resource allocation in multi-cell MN networks. Unlike traditional reference signal-based approaches that consume frequency-time resources, our approach uses power as a new dimension for the estimation of channel gains. By carefully controlling the transmit powers of base stations, our approach is capable of estimating both intra- and inter-cell interference channel gains. As a power-domain approach, it can be seamlessly integrated with the resource allocation such that IGE and resource allocation can be conducted simultaneously using the same frequency-time resources. We derive the necessary conditions for the power-domain IGE and design a practical power control scheme. We formulate a multi-objective joint optimization problem of IGE and resource allocation, propose iterative solutions with proven convergence, and analyze the computational complexity. Our simulation results show that power-domain IGE can accurately estimate strong interference channel gains with low power overhead and is robust to carrier frequency and timing offsets.
\end{abstract}


\section{Introduction}
With the goal of supporting diverse services, the 5G new radio (NR) systems introduce multi-numerology (MN) that allows flexible selection of frame structure based on service requirements, which is also envisioned to support immersive communications in 6G~\cite{mn_6g}. Frames with large subcarrier spacing (SCS) may be employed for users with strict latency requirements and frames with small SCS are preferable for users with highly frequency-selective channels. The flexibility of multi-numerology enables 5G to simultaneously support services of different requirements. However, the use of multiple numerologies also incurs the inter-numerology interference (INI) between different numerologies. In contrast to single-numerology networks, the presence of INI greatly complicates the interference management in MN networks. Each user equipment (UE) may experience not only interference from neighboring cells on its assigned frequency band but also INI from other frequency bands in both the serving and neighboring cells. It is thus critical to manage INI for the performance of MN networks. 

Effective INI management can be achieved by optimizing the MN system from different angles. From the perspective of system design, many design parameters, e.g., waveform, guard band, and cyclic prefix (CP), can be considered. New waveforms have been proposed for several typical subband filtered multi-carrier (SFMC) based systems to reduce the out-of-band emission level, including filter-band multi-carrier~\cite{fbmc, ofdm-fbmc}, universal filtered multi-carrier~\cite{ufmc}, generalized frequency division multiplexing~\cite{gfdm}, and filtered OFDM~\cite{async_f_ofdm}. Large guard bands are shown to effectively mitigate INI at the cost of spectral efficiency~\cite{guard_band, f_ofdm}. When the system design is fixed, we can further mitigate INI with resource allocation~\cite{ini_network_slicing, ini_aware_scheduling, multi_numerology_b5g}, which emphasizes the proper allocation of power and frequency-time resources to guarantee certain service requirements. In multi-cell MN networks, both the INI within the cell (intra-cell INI) and the INI between neighboring cells (inter-cell INI) need to be considered.



The interference graph, depicting the intra- and inter-cell interference channel gains, enables us to estimate link interference and the signal-to-interference-plus-noise ratios (SINRs) of UEs under different resource allocation schemes. It is thus a powerful tool to guide resource allocation for interference management~\cite{ige_resource_alloc_1, ige_resource_alloc_2, ige_resource_alloc_3, interference_graph_def}. Existing works on resource allocation in multi-cell MN networks typically assume knowing a certain form of interference graph and focus on developing resource allocation schemes. However, interference graph estimation (IGE) is a challenging and heavy measurement task, because it requires estimating the channel gains between each pair of UE and interferer. In a single-numerology network of $N$ cells, each UE needs to consider interference from each cell, equating to $O(N)$ inter-cell interference links. In MN networks, if each cell uses $M$ numerologies and that both intra- and inter-cell INIs exist, each UE may experience INI from $O(MN)$ sources of interference. 
It is therefore of paramount importance to efficiently estimate the interference graph. Efficient power-domain approaches have been proposed to estimate the interference graph in millimeter-wave backhaul networks ~\cite{power_domain_ige_globecom, power_domain_ige_twc} and have been integrated with concurrent flooding in BLE multi-hop networks~\cite{power_domain_ige_ewsn}. However, these efforts focus on the power-domain approaches in the time domain for single-numerology systems. There still lacks an efficient approach to estimate interference graph in the frequency domain for multi-cell MN networks.

INI models have been derived and used to estimate the interference in various MN systems. In \cite{windowed_ofdm, sfmc, combining_massive_mimo}, the closed-form expressions for intra-cell INI have been derived in the windowed OFDM, SFMC, and massive MIMO-OFDM systems for interference cancellation at the receiver, respectively. In~\cite{mimo_ofdm_transmit}, the expression for INI has been used to suppress intra-cell INI with precoding at the transmitter in the downlink of massive MIMO-OFDM systems. These theoretical expressions for INI assume the absence of system imperfections, e.g., time synchronization error and phase noise, and may not be applicable when practical issues are considered~\cite{sfmc, phase_noise}. In the resource allocation problems for inter-cell MN networks, intra- and inter-cell INIs need to be modelled to compute SINR. In \cite{multi_numerology_b5g}, the intra-cell INI is simply modelled without involving the channel responses. In \cite{multi_numerology_comp}, inter-cell INI is modelled as the product of the intra-cell INI and the inter-cell channel responses. These simplified models cannot reflect the INI in real systems and may significantly affect the performance of resource allocation when used in practice. 

Compared to the model-based approaches above, reference signal (RS)-based approaches are capable of measuring interference channel gains in single-numerology systems, but struggle to estimate the INI channels in MN systems due to the mismatch in numerology. In single-numerology systems, the most practical approach to measuring inter-cell interference is introduced in 4G LTE-A using the CSI-IM reference signal~\cite{csi_im}. Specifically, in order to measure the inter-cell interference from a BS to a UE on a specific resource element (RE), other BSs should send no signal on that RE. The signal UE receives on the RE will then be the interference from the BS. This process is repeated for each BS to construct the interference graph from BSs to UEs. However, this approach requires symbol-level synchronization between BSs for accurate estimation of the interference channels. With the increasing subcarrier spacing in 5G NR, the CP duration decreases to the millisecond level~\cite{TS38211}, making this approach susceptible to synchronization error~\cite{time_sync}. Furthermore, this approach consumes frequency-time resources solely for measurement purposes, and frequent measurements may incur non-negligible overhead.
Apart from the above issues, the numerology mismatch between the interferer and the receiver in MN networks prevents the receiver from estimating the interference channel gains using reference signals. Consequently, reference signals are limited to estimate the communication channel, and it remains unclear how to measure the INI channels in multi-cell MN networks using reference signals.

Considering the heavy measurement overhead of IGE, machine learning approaches have been proposed to build a direct mapping from network attributes to resource allocation decisions, without explicitly measuring interference. Geographic information of nodes has been used as input to learn the link scheduling decisions from a large scale of layouts in device-to-device networks~\cite{spatial_dl}. With graph embedding, the same task for link scheduling can be achieved using hundreds of times fewer layouts for training~\cite{graph_link_schedule}. Likewise, deep learning models can also be used to improve other performance metrics, e.g., age of information (AoI)~\cite{aoi_dl}. Although these machine learning techniques do not need explicit interference measurement, their dependence on static network attributes for decision-making constrains their ability to adapt to rapidly changing network conditions, e.g., fading.

\begin{table}[h]\label{table:comparison}
\centering
\caption{Comparison among approaches for IGE}
\renewcommand{\arraystretch}{1.1}
\begin{tabular}{lccc}
\toprule
 & Our & RS-based & Model-based \\
\midrule
Accurate IGE & \ding{52} & \ding{55} & \ding{55}\\
\hline
Robust to time sync error & \ding{52} & \ding{55} & \ding{52} \\
\hline
Adaptive to environment & \ding{52} & \ding{52} & \ding{55}  \\
\hline
No extra freq-time resource & \ding{52} & \ding{55} & \ding{52}\\
\hline
No extra power consumption & \ding{55} & \ding{52} & \ding{52}\\
\bottomrule
\end{tabular}
\end{table}

Our focus in this paper is on the estimation of interference graph for resource allocation in the downlink of OFDMA-based multi-cell MN networks.
Specifically, we want to 1) estimate the interference graph including both the intra- and inter-cell interference channels and 2) use the estimated interference graph to improve the energy efficiency of the system. Inspired by the IGE approach in~\cite{ige_mmwave_backhauling}, we propose a power-domain IGE approach that estimates the interference graph by manipulating the transmit power of BSs for multi-cell MN networks, such that the same frequency-time resources can be simultaneously used for measurement and data transmission tasks. 
Our core insight is that the expected receive power of a UE at a time slot can be written as a linear combination of the products of transmit powers of BSs and the channel gains. By combining the received powers of the UE at different time slots, a group of equations for the same set of channel gains can be obtained to derive a unique solution for IGE. With the estimated intra- and inter-cell channel gains, we can formulate a resource allocation problem aiming to maximize energy efficiency and estimate interference channel gains simultaneously. Table \ref{table:comparison} summarizes the desired properties for IGE approaches and compare the performance between our approach and existing ones, where the model-based approach models the INI with fixed properties, e.g., distance.

In summary, our contributions in this paper are as follows.
\begin{itemize}
    \item We propose a power-domain approach for interference graph estimation by controlling the transmit powers of BSs, such that interference graph estimation and resource allocation can be conducted simultaneously using the same frequency-time resources.
    \item We design a practical power control scheme that achieves millisecond-scale updates of channel gain estimates and is robust to timing and carrier frequency offsets. We conduct the error analysis for IGE to characterize the impact of the major sources of errors and derive an upper bound for the estimation error.
    \item We formulate the joint optimization problem of resource allocation and IGE for multi-cell multi-numerology systems. We propose an interative heuristic sollution with provable convergence and analyze the computational complexity.
    \item We demonstrate with simulations that our approach can accomplish resource allocation and IGE simultaneously with low overhead in energy efficiency.
\end{itemize}

Note that compared to our previous conference paper~\cite{mn_ige}, we have added substantial materials in this paper, including a practical power control scheme, the bounds of channel gain estimation errors, a multi-objective formulation of the joint optimization problem, the convergence and complexity analysis of the propose solution, and more simulation results. The rest of this paper is organized as follows. Section \ref{sec:system_model} presents the system model. Section \ref{section_interference} presents a practical power-domain approach to IGE and conducts the error analysis. A joint optimization problem is formulated and solved in Section \ref{sec:joint_opt} and the performance is evaluated in Section~\ref{sec:perf_eval}. Section~\ref{sec:conclusion} concludes the paper.

\section{System Model}
\label{sec:system_model}

\begin{figure}[t]
    \centering
    \subfigure[Network architecture]{
        \label{fig:sys_arch}
        \includegraphics[scale=0.37]{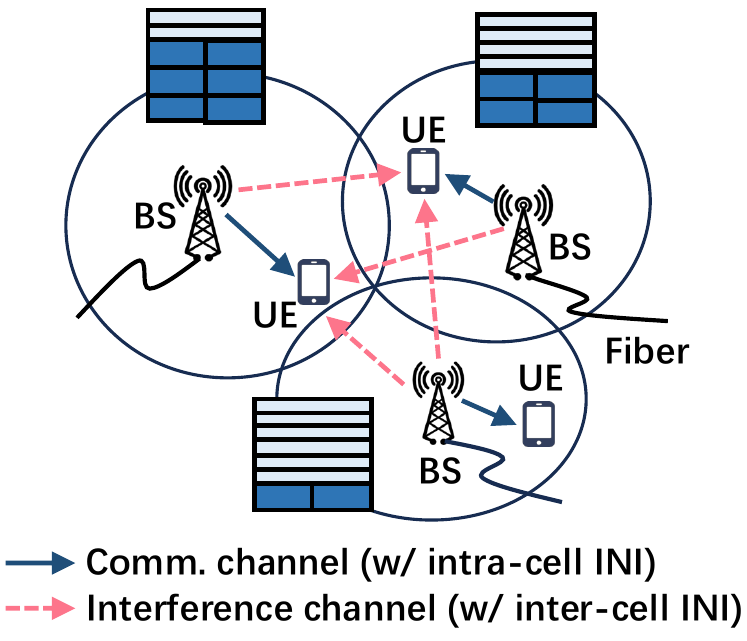}
    }
    \subfigure[Frame structure]{
        \label{fig:frame_struct}
        \includegraphics[scale=0.38]{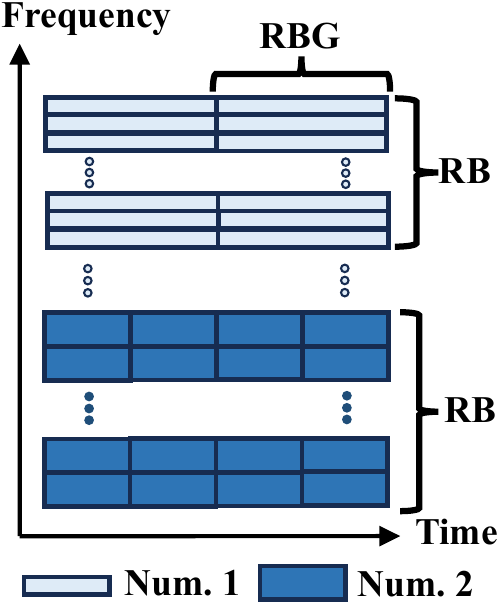}
    }
    \vspace{-0.5em}
    \caption{A multi-cell multi-numerology network}
    \label{fig:error_bound}
    \vspace{-1em}
\end{figure}

\subsection{Multi-Cell Multi-Numerology Networks}
We consider the downlink of a multi-cell OFDMA network with $K$ BSs connected to the core network via a fiber backhaul, where the fiber backhaul enables the BSs to achieve low-latency message exchange for joint power control (Figure \ref{fig:sys_arch}). Each UE is associated with one BS, and BSs serve each of its associated UEs with one numerology and use multiple numerologies to provide diverse access services. \textcolor{black}{For simplicity, we assume fixed but different numerology schemes across frequency bands over time, with each frequency band using the same numerology.} For spectrum efficiency, each BS re-uses the same frequency-time resources and UEs receive the inter-cell interference from neighboring BSs. Each UE experiences multiple sources of interference: 1) intra-cell INI in the communication channel, due to the non-orthogonality between different numerologies used by the BS, 2) inter-cell co-channel interference from neighboring cells in the same frequency, 3) inter-cell INI from neighboring cells in other frequency bands using different numerologies, and 4) inter-carrier interference due to carrier frequency offset. To mitigate interference, BSs efficiently coordinate resource allocation such that the demands of UEs are satisfied. In this paper, we consider the timing offset (TO) and carrier frequency offset (CFO) of the received signal at the UEs. The CFO is caused by the oscillator mismatch in frequency~\cite{interference_cfo} and the TO is caused by the time synchronization errors between BSs as well as the differences in propagation delay from BSs to the UE~\cite{timing_offset}. 

Let $\mathcal{I} = \{0,\dots,I\}$ be the total numerology set and $\mathcal{I}_k \subseteq \mathcal{I}$ be the numerology set used by BS $k$, where numerology $0$ has the narrowest SCS. Let $\Delta f^{i}$ be the SCS of numerology $i$, and $N^{i}$ and $N^{i}_{cp}$ be the lengths of the OFDM symbols and CPs corresponding to numerology $i$. Following the standard design in 5G NR~\cite{release_17}, we relate the SCSs, the lengths of OFDM symbols and CPs between numerologies as
\begin{equation}\label{eq:numerology_scheme}
    \frac{N^{i}}{N^{0}} = \frac{N_{cp}^{i}}{N_{cp}^{0}} = \frac{\Delta f^{0}}{\Delta f^{i}} = \frac{1}{2^i}.
\end{equation}
\textcolor{black}{We assume that the CP lengths of all numerologies are greater than the delay spread to avoid inter-symbol interference (ISI).}
Figure \ref{fig:frame_struct} shows the frame structure of the resource grid with two numerologies following the above relations. Each resource block (RB) consists of 12 subcarriers and lasts for one OFDM symbol time. Consecutive RBs are combined into RB groups (RBGs), which are the least common multiple of the symbol durations across all numerologies. Since the RBGs of different numerologies are aligned in time, we can flexibly schedule resources at the RBG level. 





\subsection{Channel Model}


The $n$-th sample of the time-domain transmitted signal from BS $k$ is denoted as $x_k[n]$, which is the combined signal of the numerologies in $\mathcal{I}_k$.  We can express $x_k[n]$ as $x_k[n]= \sum_{i\in\mathcal{I}_k} x^{i}_k[n]$, where $x^{i}[n]$ is the signal of numerology $i$. Let $X_k^{i}[m]$ be the symbol transmitted over the $m$-th subcarrier using numerology $i$ and $\mathcal{Z}^{i}_k$ be the set of RBs using numerology $i$, where these RBs are not necessarily contiguous. 
Setting $X_k^{i}[m]$ to zero for subcarriers not in the RBs in $\mathcal{Z}^{i}_k$, we can simply write $x_k^{i}[n]$ as
\begin{equation*}
    x_k^{i}[n] = \frac{1}{\sqrt{N^{i}}}\sum_{m=0}^{N^{i}-1}X_k^{i}[m]e^{j2\pi mn/N^{i}}.
\end{equation*}

We denote the CFO of BS $k$ as $\omega_k$, normalized to the subcarrier spacing, and the synchronization error as $\zeta_k$. Suppose the channel from BS $k$ to UEs is an independent $L$-tap multipath fading channel. The $n$-th sample in the time-domain received signal of UE $z$ can be expressed as
\begin{equation}\label{eq:received_signal}
    y_z[n] = \sum_{k\in \mathcal{K}} e^{j2\pi n\omega_k/N^{i}_{k,z}}\sum_{l=0}^{L - 1} h_{k,z}^{(l)} x_k[n - l-\zeta_k] + v[n],
\end{equation}
where $N^{i}_{k,z}$ is the number of subcarriers of UE $z$ in BS $k$ using numerology $i$, $h^{(l)}_{k,z}$ is the $l$-th channel tap from BS $k$ to node $z$, and $v[n]$ is the zero-mean additive Guassian noise with variance $\sigma^2_v$. 




\section{Interference Graph Estimation: \\ A Power-Domain Approach}
\label{section_interference}

\subsection{Estimating the Interference Graph with Power Control}

Let $T_{LCM}$ be the symbol duration of numerology 0, i.e., the duration of RGBs, and $T_i$ be the symbol duration of numerology $i$. Based on the numerology properties in Eq. (\ref{eq:numerology_scheme}), we then have $T_{LCM} = 2^{i}T_i$, where numerology $0$ has the narrowest SCS. We first derive the power linearity when CP exceeds TO and then extend the derivation to cases where CP is insufficient, i.e., shorter than TO, in Section~\ref{sec:error_analysis}. It should be noted that TO includes both the time synchronization error and delay spread.

\begin{lemma}\label{lemma:linearity_slow_moving}
For the downlink of the OFDMA-based multi-cell multi-numerology systems, if CP exceeds TO and $\mathbb{E}[X_k^{i}[m]] = 0$ for each subcarrier, the scaled expected receive power of UE $z$ on subcarrier $d$ is a linear combination of the products of the equivalent channel gains and the scaled expected transmit powers of BSs on subcarriers, i.e.,
\begin{align*}
2^i\hat{p}_{z, rx}^i[d] = \sum_{k \in \mathcal{K}}\sum_{j\in\mathcal{I}_k}\sum_{m=0}^{N^j-1} 2^j\hat{p}_{k, tx}^j[m] s_{(k,m),(z,d)}^{j \rightarrow i} + \hat{V}_z[d],
\end{align*}
where $\hat{p}_{z,rx}^i[d] = \mathbb{E}[|Y_z^i[d]|^2]$ is the receive power of UE $z$ on the $d$-th subcarrier using numerology $i$, $\hat{p}_{k, tx}^i[m] = \mathbb{E}[|X_k^j[m]|^2]$ is the transmit power of BS $k$ on the $m$-th subcarrier using numerology $j$, $\hat{s}_{(k,m),(z,d)}^{j \rightarrow i}$ is the equivalent channel gain from the $m$-th subcarrier of BS $k$ to the $d$-th subcarrier of UE $z$, and $\hat{V}_z[d]$ is the noise power on the $d$-th subcarrier of UE $z$.
\end{lemma}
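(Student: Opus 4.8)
The plan is to start from the frequency-domain demodulated symbol $Y_z^i[d]$ and trace how each transmitted symbol $X_k^j[m]$ contributes to it, and only then pass to the second moment. First I would write $Y_z^i[d]$ explicitly as the normalized DFT of the received samples over one numerology-$i$ symbol window after CP removal,
\begin{equation*}
Y_z^i[d] = \frac{1}{\sqrt{N^i}}\sum_{n=0}^{N^i-1} y_z[n]\, e^{-j2\pi d n/N^i},
\end{equation*}
and substitute the channel model in Eq.~\eqref{eq:received_signal} together with $x_k[n]=\sum_{j\in\mathcal{I}_k}x_k^j[n]$ and the OFDM synthesis formula for $x_k^j[n]$. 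This expresses $Y_z^i[d]$ as a linear combination $Y_z^i[d]=\sum_{k}\sum_{j\in\mathcal{I}_k}\sum_{m} c_{(k,m),(z,d)}^{j \rightarrow i}\,X_k^j[m] + V_z[d]$, where each coefficient $c$ collects the channel taps $h_{k,z}^{(l)}$, the CFO phase $e^{j2\pi n\omega_k/N^i_{k,z}}$, the timing shift $\zeta_k$, and the two DFT kernels of mismatched lengths $N^i$ and $N^j$, while $V_z[d]$ is the demodulated noise.

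Next I would form $|Y_z^i[d]|^2$ and take the expectation. The crucial step is that, because $\mathbb{E}[X_k^j[m]]=0$ and the data symbols on distinct subcarriers, BSs, and numerologies are mutually uncorrelated, every cross term $\mathbb{E}[X_k^j[m]\,\overline{X_{k'}^{j'}[m']}]$ with $(k,j,m)\neq(k',j',m')$ vanishes, as do all signal--noise cross terms since $v[n]$ is zero-mean and independent of the data. Only the diagonal terms survive, giving
\begin{equation*}
\mathbb{E}[|Y_z^i[d]|^2] = \sum_{k}\sum_{j\in\mathcal{I}_k}\sum_{m} \bigl|c_{(k,m),(z,d)}^{j \rightarrow i}\bigr|^2\, \mathbb{E}[|X_k^j[m]|^2] + \mathbb{E}[|V_z[d]|^2].
\end{equation*}
Identifying $\mathbb{E}[|X_k^j[m]|^2]=\hat p_{k,tx}^j[m]$ and $\mathbb{E}[|V_z[d]|^2]=\hat V_z[d]$, this is already the claimed linear form, with the equivalent channel gain $s_{(k,m),(z,d)}^{j \rightarrow i}$ being essentially $|c|^2$ up to the numerology scaling. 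The remaining bookkeeping is to insert the factors $2^i$ and $2^j$: using the relation that numerology $i$ carries $2^i$ symbols within one RBG (equivalently $N^i=N^0/2^i$ from Eq.~\eqref{eq:numerology_scheme}), the factor $2^i$ (resp.\ $2^j$) rescales the per-symbol subcarrier power to a common per-RBG time reference, so that $2^i\hat p_{z,rx}^i[d]$ and $2^j\hat p_{k,tx}^j[m]$ match, and the residual normalization constants are folded into the definition of $s_{(k,m),(z,d)}^{j \rightarrow i}$.

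I expect the main obstacle to be handling the cross-numerology ($j\neq i$) contributions cleanly. When the interferer's numerology $j$ differs from the receiver's numerology $i$, the two DFT windows have mismatched lengths, so the subcarrier orthogonality that trivially collapses the same-numerology case no longer holds, and $c_{(k,m),(z,d)}^{j \rightarrow i}$ becomes a nontrivial leakage kernel (a Dirichlet-type sum over the window) rather than a simple sample of the channel frequency response. The point to argue carefully is that this complicates only the \emph{value} of each coefficient, not the \emph{structure} of the second moment: the zero-mean, uncorrelated symbol assumption annihilates all cross terms regardless of numerology, so the linear decomposition survives intact. A secondary subtlety is invoking the hypothesis that CP exceeds TO: this guarantees that the timing shift $\zeta_k$ together with the multipath spread keeps every contributing sample inside the current symbol's CP-extended window, so no inter-symbol interference from adjacent symbols leaks in and the single-window DFT above captures the entire contribution. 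Deferring the insufficient-CP case to Section~\ref{sec:error_analysis} is consistent with this.
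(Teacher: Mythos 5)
Your proposal follows essentially the same route as the paper's proof: demodulate in the frequency domain, expand $Y_z^i[d]$ as a linear combination of the transmitted symbols, and use the zero-mean, mutually independent symbol assumption to annihilate all cross terms in the second moment, leaving only diagonal terms whose coefficients $|c|^2$ (Dirichlet-type leakage kernels built from the channel taps, CFO, and mismatched DFT windows) are the equivalent channel gains—exactly what the paper does for two numerologies before extending by independence. The one bookkeeping detail to make explicit is that when the interferer's numerology $j$ has shorter symbols than the receiver's window, the decomposition needs an additional symbol-time index (the paper's $X_{k,l}^j[m]$, indexed within the LCM duration), with independence across that index killing those cross terms as well; this is the same mechanism you already invoke, so it does not change the structure of the argument.
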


\begin{IEEEproof}
Please refer to Appendix \ref{section:linearity_slow_moving}.
\end{IEEEproof}



Lemma \ref{lemma:linearity_slow_moving} indicates the relation between power and channel gains at the subcarrier level. Considering that frequency-time resources in OFDMA-based systems are allocated at the resource block (RB) level~\cite{resource_alloc_granularity}, we want to aggregate subcarriers into RBs of $B$ subcarriers, where subcarriers in the same RB are assumed using identical transmit power. Let $\mathcal{RB}_k$ be the set of RBs used by BS $k$ and $\mu_d$ be the numerology of RB $d$. By summing up the receive powers on subcarriers in a RB, we can have the following corollary.

\begin{corollary} \label{corollary1}
The scaled expected receive power of a UE on the $d$-th RB is a linear combination of the products of the equivalent channel gains and the scaled expected transmit powers of BSs on RBs, i.e.,
\begin{equation*}
    2^{\mu_d}p_{z}^{rx}[d] = \sum_{k \in \mathcal{K}}\sum_{l\in\mathcal{RB}_k} 2^{\mu_l}p_{k}^{tx}[l] s_{(k,l),(z,d)} + V_z[d],
\end{equation*}
where $p_{z}^{rx}[d] = \sum_{j=B(d-1)+1}^{Bd} \hat{p}_{z, rx}^{\mu_d}[j]$ is the receive power of UE $z$ on RB $d$, $p_{k}^{tx}[l] = B\hat{p}_{k, tx}^{\mu_l}[B(l-1)+1]$ is the transmit power on the $l$-th RB of BS $k$, $s_{(k,l),(z,d)} = \frac{1}{B^2}\sum\limits_{r=B(l-1) + 1}^{Bl}\sum\limits_{j=B(d-1) + 1}^{Bd}\hat{s}_{(k,r),(z,j)}^{\mu_l \rightarrow \mu_d}$, and $V_z[d] = \sum_{j=B(d-1) + 1}^{Bd}\hat{V}_z[j]$.
\end{corollary}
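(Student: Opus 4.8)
The plan is to obtain the RB-level identity directly from Lemma~\ref{lemma:linearity_slow_moving} by summing the per-subcarrier relation over the $B$ subcarriers that constitute a receiving RB, and then regrouping the interferer-side summation from a sum over (numerology, subcarrier) pairs into a sum over RBs. Since Lemma~\ref{lemma:linearity_slow_moving} already supplies the linearity of the expected receive power in the products of transmit powers and channel gains, no new probabilistic argument is needed; the corollary is a purely algebraic aggregation, and the work is entirely in the index bookkeeping.

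Concretely, I would first fix a receiving RB $d$ of UE $z$. Its $B$ subcarriers $j = B(d-1)+1,\dots,Bd$ all use the same numerology $\mu_d$, so applying Lemma~\ref{lemma:linearity_slow_moving} to each of them and summing over $j$ immediately reproduces the left-hand side: the common scale $2^{\mu_d}$ factors out to give $2^{\mu_d}\sum_j \hat p_{z,rx}^{\mu_d}[j] = 2^{\mu_d} p_z^{rx}[d]$, while the noise terms collapse to $\sum_j \hat V_z[j] = V_z[d]$ by the definitions in the statement. To avoid a clash with the interferer-side numerology index $j$ in the lemma, I would rename that index, say to $i$, before summing. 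It then remains to show that the aggregated interference term equals $\sum_{k}\sum_{l\in\mathcal{RB}_k} 2^{\mu_l} p_k^{tx}[l]\, s_{(k,l),(z,d)}$.

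For that term I would interchange the finite sums so that the receiver-subcarrier sum over $j$ sits innermost, and then reindex the interferer-side double sum $\sum_{i\in\mathcal{I}_k}\sum_{m=0}^{N^i-1}$ as a sum over RBs: the (numerology, subcarrier) index set of BS $k$ partitions into the RBs $l\in\mathcal{RB}_k$, where RB $l$ carries numerology $\mu_l$ and spans subcarriers $r=B(l-1)+1,\dots,Bl$, so $\sum_{i}\sum_{m}$ becomes $\sum_{l}\sum_{r}$ with $2^i$ replaced by $2^{\mu_l}$. The RB-level assumption that all subcarriers in an RB share a common transmit power lets me pull $\hat p_{k,tx}^{\mu_l}[r]=\hat p_{k,tx}^{\mu_l}[B(l-1)+1]$ out of the $r$-sum, after which the residual double sum $\sum_{r}\sum_{j}\hat s_{(k,r),(z,j)}^{\mu_l\rightarrow\mu_d}$ is exactly the unnormalized quantity appearing in the definition of $s_{(k,l),(z,d)}$. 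Substituting the definitions $p_k^{tx}[l]=B\hat p_{k,tx}^{\mu_l}[B(l-1)+1]$ and $s_{(k,l),(z,d)}=\frac{1}{B^2}\sum_r\sum_j \hat s_{(k,r),(z,j)}^{\mu_l\rightarrow\mu_d}$ then yields the claimed form.

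The only delicate point, and the step I would check most carefully, is the bookkeeping of the $B$-dependent constants: the factor $B$ in the RB transmit power $p_k^{tx}[l]$, the normalization in $s_{(k,l),(z,d)}$, and the $B^2$ terms produced by the $\sum_r\sum_j$ double sum must be collected consistently so that they reduce to exactly the single coefficient appearing in the statement. Everything else — the interchange of finite sums, the reindexing into RBs, and the factoring of the shared transmit power — is routine once the partition of the interferer index set into RBs is set up cleanly.
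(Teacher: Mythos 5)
Your strategy is exactly the paper's: Corollary~\ref{corollary1} is introduced there with nothing more than ``by summing up the receive powers on subcarriers in a RB,'' i.e., sum the subcarrier-level identity of Lemma~\ref{lemma:linearity_slow_moving} over the $B$ subcarriers of the receiving RB, regroup the interferer-side (numerology, subcarrier) sum into a sum over RBs, and pull out the common per-subcarrier transmit power within each RB. So in approach you and the paper coincide, and your index bookkeeping (renaming the clashing index, interchanging finite sums, partitioning into RBs) is fine.

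However, the one step you singled out as delicate --- the collection of the $B$-dependent constants --- is precisely where the argument does not close, and you assert it rather than verify it. Carrying out the substitution: the aggregation gives, for each interfering RB $l$, the term $2^{\mu_l}\,\hat p_{k,tx}^{\mu_l}[B(l-1)+1]\sum_{r}\sum_{j}\hat s_{(k,r),(z,j)}^{\mu_l\rightarrow\mu_d}$. Using the stated definitions, $\hat p_{k,tx}^{\mu_l}[B(l-1)+1] = p_k^{tx}[l]/B$ and $\sum_{r}\sum_{j}\hat s_{(k,r),(z,j)}^{\mu_l\rightarrow\mu_d} = B^2\, s_{(k,l),(z,d)}$, so the term equals $2^{\mu_l}\,B\,p_k^{tx}[l]\,s_{(k,l),(z,d)}$ --- an extra factor of $B$ relative to the displayed equation, since $B\cdot\tfrac{1}{B^2}=\tfrac{1}{B}\neq 1$. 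In other words, with the paper's definitions the identity as displayed cannot be reached by this (or any) summation: either $s_{(k,l),(z,d)}$ should carry a $1/B$ normalization instead of $1/B^2$, or $p_z^{rx}[d]$ should be the average rather than the sum of the subcarrier receive powers. The discrepancy is benign for everything downstream --- in Theorem~\ref{theorem:full-rank} the vector $\bm{s}_i$ is the unknown being solved for, so a fixed multiplicative constant is simply absorbed into it --- but a complete proof must either exhibit this factor of $B$ and state which normalization it adopts, or it proves a statement different from the one claimed; asserting that the constants ``reduce to exactly the single coefficient appearing in the statement'' is, as written, false.
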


Based on Corollary \ref{corollary1}, we can derive the sufficient condition for achieving IGE with power control. Let the total number of RBs involved in the multi-cell MN networks be $n_{tot}$ with indices from 1 to $n_{tot}$.

\begin{theorem}\label{theorem:full-rank}
The equivalent channel gains on different RBs from BSs to UEs can be uniquely determined by controlling the scaled transmit powers of BSs on RBs over time such that
\begin{equation}
    rank(\mathbf{P}) = n_{tot},
\end{equation}
where $\mathbf{P} = \left[\bm{p}_1^{tx},\dots, \bm{p}_{n_{tot}}^{tx}\right]$ is the scaled transmit power matrix of BSs, $\bm{p}_i^{tx} = \left[2^{\mu_i}p_{i}^{tx}[j],\dots, 2^{\mu_i}p_{i}^{tx}[j + n - 1]\right]^T$ includes the scaled transmit powers on the $i$-th RB from time $j$ to $j+n-1$, and $\mu_i$ is the numerology of RB $i$.
\end{theorem}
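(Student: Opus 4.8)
The plan is to read the claim as a statement about the unique solvability of a linear system and to reduce it to a standard full-column-rank condition. First I would fix an arbitrary receiving node, i.e.\ a UE $z$ together with one of its RBs $d$, and regard the $n_{tot}$ equivalent channel gains $\{s_{(k,l),(z,d)}\}$ as the unknowns to be recovered. Using the global re-indexing of RBs from $1$ to $n_{tot}$ (so that each index $i$ corresponds to a unique pair $(k,l)$ with $l\in\mathcal{RB}_k$), Corollary~\ref{corollary1} yields, for each measurement slot $t\in\{j,\dots,j+n-1\}$, a single scalar equation whose coefficient on the $i$-th unknown is exactly the scaled transmit power $2^{\mu_i}p_i^{tx}[t]$. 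Stacking these $n$ equations vertically, I would write the system in matrix form as $\bm{y}_{(z,d)} = \mathbf{P}\,\bm{s}_{(z,d)} + V_z[d]\,\bm{1}$, where $\bm{y}_{(z,d)}$ collects the scaled received powers $2^{\mu_d}p_z^{rx}[d]$ over the $n$ slots, $\bm{s}_{(z,d)}$ is the $n_{tot}$-vector of unknown channel gains, $\mathbf{P}$ is precisely the $n\times n_{tot}$ scaled transmit-power matrix in the statement, and $\bm{1}$ is the all-ones vector (assuming the noise power is stationary over the short window).

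Next I would invoke the rank hypothesis. The condition $rank(\mathbf{P})=n_{tot}$ means $\mathbf{P}$ has full column rank (forcing $n\ge n_{tot}$), so the map $\bm{s}\mapsto\mathbf{P}\bm{s}$ is injective and the Gram matrix $\mathbf{P}^{T}\mathbf{P}$ is invertible. Hence, once the known noise contribution is subtracted, $\bm{s}_{(z,d)}$ is the unique solution, given explicitly by the normal equations $\bm{s}_{(z,d)} = (\mathbf{P}^{T}\mathbf{P})^{-1}\mathbf{P}^{T}\!\left(\bm{y}_{(z,d)}-V_z[d]\bm{1}\right)$, which collapses to $\mathbf{P}^{-1}(\cdot)$ when $n=n_{tot}$. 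The essential observation is that $\mathbf{P}$ depends only on the transmit-power schedule and not on the receiving node, so a single full-rank schedule simultaneously determines the channel gains for every $(z,d)$; iterating over all receiving nodes then recovers the entire interference graph.

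The linear-algebra core is routine; the parts demanding care are the noise term and the achievability of the rank condition. If the per-RB noise power $V_z[d]$ is not measured separately, I would fold it into the unknowns by appending a constant all-ones column to $\mathbf{P}$, strengthening the requirement to $rank([\mathbf{P}\ \bm{1}])=n_{tot}+1$ and recovering the gains and the noise power jointly. The genuinely substantive point, which I expect to be the main obstacle and which motivates the practical power-control design in the sequel, is to exhibit a \emph{feasible} power schedule---one respecting per-BS power budgets and the simultaneous data-transmission requirement---that actually realizes $rank(\mathbf{P})=n_{tot}$ over a window of only $n\ge n_{tot}$ slots. Generic power perturbations (e.g.\ randomized or staircase variations across slots) make $\mathbf{P}$ full rank almost surely, so the theorem as an existence-and-uniqueness statement follows readily; guaranteeing the rank condition under the operational constraints, rather than the uniqueness argument itself, is the real design burden.
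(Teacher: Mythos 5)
Your proposal is correct and follows essentially the same route as the paper: both fix a receiving RB, invoke Corollary~\ref{corollary1} to stack the per-slot power measurements into the linear system $\bm{p}^{rx}_i = \mathbf{P}\bm{s}_i + \bm{v}_i$, and conclude uniqueness of $\bm{s}_i$ from the full rank of $\mathbf{P}$, noting that one schedule serves all UEs simultaneously. Your additional refinements---treating the unknown noise by appending an all-ones column to $\mathbf{P}$, and observing that feasibility of the rank condition (not uniqueness) is the real design burden---go beyond the paper's terse proof but are consistent with how the paper handles these issues later, via the error analysis of Eq.~(\ref{eq:error_bound}) and the power-control scheme of Section~\ref{sec:joint_opt}.
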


\begin{IEEEproof}
Let $\bm{p}^{rx}_i = \left[2^{\mu_i}p_i^{rx}[j],\dots, 2^{\mu_i}p_i^{rx}[j + n - 1]\right]^T$ be the scaled receive powers of a UE on RB $i$ over time. Let $s_{d,i}$ denote the interference channel gain between RB $d$ of the BS and RB $i$ of the UE. According to Corollary~\ref{corollary1}, we have
\begin{equation}\label{eq:linearity_matrix_form}
    \bm{p}^{rx}_i = \mathbf{P}\bm{s}_i + \bm{v}_i,
\end{equation}
where $\bm{s}_i = \left[s_{1,i},\dots, s_{n_{tot}, i}\right]^T$ and $\bm{v}_i = [V_i[j],\dots,V_i[j+n-1]]^T$ is the vector of noise power. If the transmit power matrix $\mathbf{P}$ is full-rank, we can then have a unique solution for $\bm{s}_i$. This holds for all UEs and enables us to estimate the interference channel gain between each pair of BS and UE at the RB level.
\end{IEEEproof}

Note that our equivalent channel gains include only the magnitude, which is sufficient for resource allocation; however, they lack the phase information necessary for demodulation.

\subsection{Practical Power Control Scheme}

\begin{figure}[!t]
    \centering
        \includegraphics[scale=0.45]{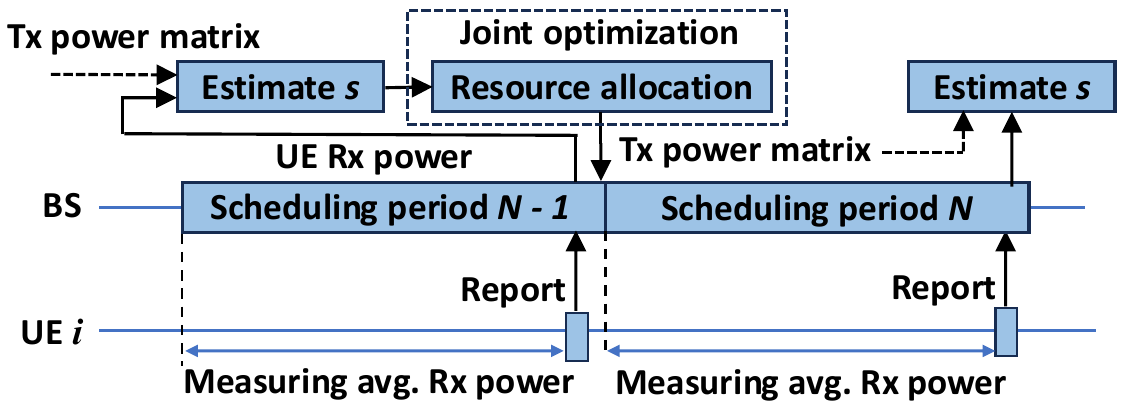}
    \caption{High-level workflow of the power control scheme}\label{fig:overview}
    \vspace{-1em}
\end{figure}

\begin{figure}[!t]
    \centering
        \includegraphics[scale=0.45]{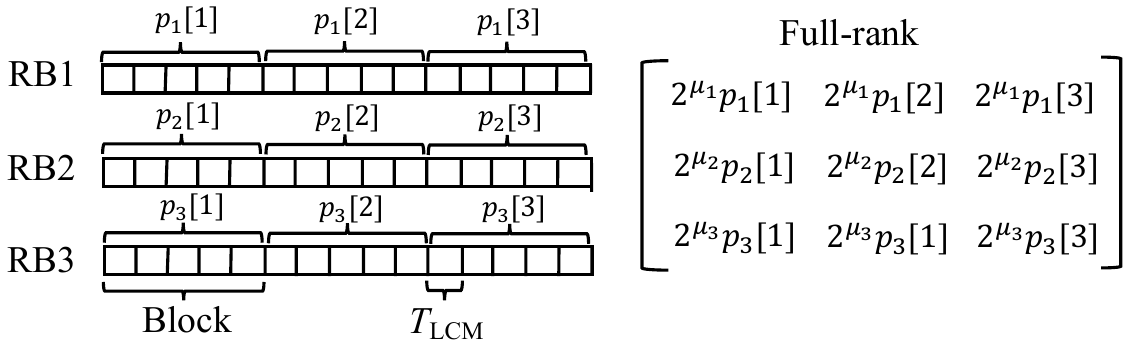}
    \caption{Practical power control scheme for interference graph estimation}\label{fig:power_control_scheme}
    \vspace{-1em}
\end{figure}

\noindent\textbf{High-Level Workflow.} As shown in Fig.~\ref{fig:overview}, resource allocation is conducted periodically. At the end of each scheduling period, UEs report their average receive powers back to the BSs. Combining the average receive powers of UEs in the current period and the transmit powers of interfering BSs determined at the end of the previous period, each BS can estimate the equivalent interference channel gains at the RB level for UEs using Eq.~(\ref{eq:linearity_matrix_form}). Then, BSs estimate the equivalent interference channel gains and solve a joint optimization problem to allocate transmit powers for the next period. This power allocation not only maximizes energy efficiency but also ensures the full rank of the transmit power matrix for IGE.

\noindent\textbf{Forming the Full-Rank Transmit Power Matrix.} We form the full-rank transmit power matrix in Theorem~\ref{theorem:full-rank} by varying the transmit powers of BSs on RBs over time. Fig.~\ref{fig:power_control_scheme} shows an example of the practical power control scheme involving three RBs, each of which may interfere adjacent RBs with INI and ICI. To measure the interference channel gains between these RBs and a UE, the transmit power on each RB changes for every block, where each \emph{block} lasts for multiple $T_{LCM}$'s. The transmit power on each RB is scaled to form a full-rank matrix such that a unique solution can be obtained for the equivalent channel gains.

\noindent\textbf{Approximating the Expected Transmit Power.}
To estimate equivalent channel gains by Theorem~\ref{theorem:full-rank}, we want the actual transmit power to approximate the expected transmit power. This requires us to send a sufficient number of samples on each RB to ensure that the average transmit power on each RB closely approximates the expected one. 
Let $X_{k,j}[m]$ and $n_k[m]$ be the $j$-th frequency-domain symbol and the number of samples on the $m$-th subcarrier of BS $k$, $X^{I}_{k,j}[m]$ and $X^{R}_{k,j}[m]$ be the imaginary and real parts of $X_{k,j}[m]$, and $X^{I, \max}_{k,j}[m]$ be the maximum imaginary part of constellation symbols. Let $\mu_l = \sum_{m=(l-1)B+1}^{lB}\mathbb{E}[|X_{k,j}[m]|^2]$ be the expected transmit power of RB $l$, $\hat{\mu}_l = \frac{1}{Bn_k[m]}\sum_{m=(l-1)B+1}^{lB}\sum_{j=1}^{n_k[m]}|X_{k,j}[m]|^2$ be the average transmit power of RB $l$, and $0 \leq \delta \leq 1$ be the approximation error. We have the following theorem to approximate $\mu_m$ with $\hat{\mu}_m$ at a high confidence level.

\begin{theorem}\label{theorem:bennet}
Assume that modulated symbols are chosen with equal probability from the constellation with $\mathbb{E}[X_{k,j}[m]] = 0$ and $\Var(X^{I}_{k,j}[m]) = \Var(X^{R}_{k,j}[m]) = \sigma_m^2$. The average transmit power can be bounded as
\begin{equation*}
    \mathbb{P}[|\hat{\mu}_l - \mu_l| \geq \mu_l\delta] \leq \exp\left\{\frac{-n_{k,l}\Tilde{\sigma}^2_m}{b_m^2}F\left(\frac{2b_m\sigma_m^2\delta}{\Tilde{\sigma}^2_m}\right)\right\},
\end{equation*}
where $F(t) = (1+t)\log(1+t)$, $\Tilde{\sigma}^2_m = 2\mathbb{E}[|X^I_{k,j}[m]|^4] - 2\sigma_m^4$, $b_m = 2|X^{I,\max}_{k,j}[m]^2 - \sigma_m^2|$, and $n_{k,l} = \sum_{m=(l-1)B+1}^{lB} n_k[m]$. When the square $M$-QAM modulation with $M$ = $\{4$, $16$, $64$, $\dots\}$ is used, we have
\begin{equation*}
    \Tilde{\sigma}^2_m = \frac{8d_m^4(M^2 - 5M + 4)}{45} \text{ and }
    b_m = \frac{4d_m^2|M - 3\sqrt{M} + 2|}{3},
\end{equation*}
where $d_m$ is the minimum distance between the constellation symbols for subcarrier $m$.  
\end{theorem}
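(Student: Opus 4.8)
The plan is to recognize $\hat{\mu}_l$ as a normalized sum of bounded, independent, zero-mean random variables and then invoke a Bennett-type concentration inequality. First I would define, for each subcarrier $m$ in RB $l$ and each symbol index $j$, the centered power sample $Z_{j,m} = |X_{k,j}[m]|^2 - \mathbb{E}[|X_{k,j}[m]|^2]$. Since the symbols are drawn independently and with equal probability from the constellation, the $Z_{j,m}$ are independent and zero-mean, and $\hat{\mu}_l - \mu_l$ is, up to the normalization fixed by the definitions of $\hat{\mu}_l$ and $\mu_l$, proportional to $\sum_{m}\sum_{j} Z_{j,m}$, a sum of $n_{k,l} = \sum_m n_k[m]$ terms. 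Using $\mathbb{E}[|X_{k,j}[m]|^2] = \Var(X^{R}_{k,j}[m]) + \Var(X^{I}_{k,j}[m]) = 2\sigma_m^2$, the event $\{|\hat{\mu}_l - \mu_l| \geq \mu_l\delta\}$ translates into a deviation event $\{|\sum_{m,j} Z_{j,m}| \geq t\}$ with $t = 2n_{k,l}\sigma_m^2\delta$ (here $0 \leq \delta \leq 1$ keeps the deviation nonnegative and the tail argument in range).

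Second, I would compute the two ingredients Bennett's inequality requires, a per-sample variance and an almost-sure bound. Exploiting the independence of the real and imaginary parts of a QAM symbol together with $\mathbb{E}[X_{k,j}[m]] = 0$, I get $\Var(Z_{j,m}) = \Var((X^{R}_{k,j}[m])^2) + \Var((X^{I}_{k,j}[m])^2) = 2(\mathbb{E}[|X^{I}_{k,j}[m]|^4] - \sigma_m^4) = \tilde{\sigma}_m^2$, which is exactly the stated variance. For the bound, $|Z_{j,m}| = |((X^{R})^2 - \sigma_m^2) + ((X^{I})^2 - \sigma_m^2)|$ is largest when both coordinates attain their extreme amplitude, giving $|Z_{j,m}| \leq 2|(X^{I,\max}_{k,j}[m])^2 - \sigma_m^2| = b_m$ almost surely; I would verify that the innermost-level excursion does not exceed this, which holds for square QAM.

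Third, I would apply Bennett's inequality to $\sum_{m,j} Z_{j,m}$ with total variance $V = n_{k,l}\tilde{\sigma}_m^2$ and range $b_m$, so that $\mathbb{P}[|\sum_{m,j} Z_{j,m}| \geq t] \leq \exp(-\frac{V}{b_m^2} F(\frac{b_m t}{V}))$; substituting $t = 2n_{k,l}\sigma_m^2\delta$ turns the argument into $\frac{2b_m\sigma_m^2\delta}{\tilde{\sigma}_m^2}$ and the prefactor into $\frac{n_{k,l}\tilde{\sigma}_m^2}{b_m^2}$, matching the claim. This is the step I would treat most carefully and the main obstacle: the Chernoff-optimized Bennett bound naturally produces the tail function $(1+t)\log(1+t) - t$, so recovering the displayed $F(t)=(1+t)\log(1+t)$ and suppressing the union-bound factor from the two-sided event hinges on the precise (one-sided) version of the inequality invoked and on the exact normalization convention in the definitions above.

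Finally, for the square-$M$-QAM specialization I would evaluate the moments of the underlying $\sqrt{M}$-PAM constellation, whose levels sit at odd multiples of the half-spacing. Using the elementary sums $\sum_{i=1}^{\sqrt{M}/2}(2i-1)^2 = \frac{\sqrt{M}(M-1)}{6}$ and $\sum_{i=1}^{\sqrt{M}/2}(2i-1)^4 = \frac{\sqrt{M}(M-1)(3M-7)}{30}$ yields closed forms for $\sigma_m^2$, for $\mathbb{E}[|X^{I}_{k,j}[m]|^4]$, and for $(X^{I,\max}_{k,j}[m])^2$. Substituting these into the definitions of $\tilde{\sigma}_m^2$ and $b_m$ and simplifying (after factoring $M^2 - 5M + 4 = (M-1)(M-4)$) gives $\tilde{\sigma}_m^2 = \frac{8d_m^4(M^2-5M+4)}{45}$ and $b_m = \frac{4d_m^2|M - 3\sqrt{M} + 2|}{3}$. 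This part is routine algebra; the only subtlety is fixing the convention that relates $d_m$ to the level spacing consistently, so that the numerical constants come out as stated (one can sanity-check $M=4$, which forces constant symbol power and hence $\tilde{\sigma}_m^2 = b_m = 0$, and $M=16$).
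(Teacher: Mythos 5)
Your proposal is correct and follows essentially the same route as the paper's proof: center the per-symbol power $|X_{k,j}[m]|^2$, take $\tilde{\sigma}_m^2$ as its variance and $b_m$ as its almost-sure range, apply Bennett's inequality, and specialize to square $M$-QAM via the $\sqrt{M}$-PAM moments (the paper merely cites $X^{I,\max}_{k,j}[m]=(\sqrt{M}-1)d_m$ and $\sigma_m^2=\frac{M-1}{3}d_m^2$ and omits the algebra you carry out). If anything, you are more careful than the paper's three-line proof: the caveats you flag --- that standard Bennett yields the tail function $(1+t)\log(1+t)-t$ rather than the stated $F(t)=(1+t)\log(1+t)$, that the two-sided event nominally costs a union-bound factor, and that $d_m$ must be read as the half-spacing of the constellation for the stated constants to come out --- are genuine imprecisions in the published statement that the paper's proof silently glosses over.
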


\begin{proof}
    Let $\Tilde{X} = |X_{k,j}[m]|^2 - \sigma_m^2$ be a zero-mean random variable such that $\Var(\Tilde{X}) = \Tilde{\sigma}_m^2$ and $|\Tilde{X}| \leq b_m$. We know that $\mu_m = \mathbb{E}[|X_{k,j}[m]|^2] = 2\mathbb{E}[|X^I_{k,j}[m]|^2] = 2\sigma_m^2$. The bound can be obtained with the Bennett's inequality. When $M$-QAM is used, $X^{I,max}_{k,j}[m] = (\sqrt{M} - 1)d_m$ and $\sigma_m^2 = \frac{M-1}{3}d_m^2$.
\end{proof}

Note that for modulations with symbols of equal energy, e.g., QPSK and BPSK, the average transmit power $\Tilde{\mu}_l$ is always equal to $\mu_l$, regardless of the number of transmitted symbols. Given the approximation error $\delta$, we can calculate the block length based on Theorem~\ref{theorem:bennet}.


\noindent\textbf{Speeding up the IGE Process.}
In cellular networks, reference signals for channel gain estimation are typically sent from BSs to UEs every tens of milliseconds to track channel dynamics~\cite{rs_periodicity}. We therefore want our power-domain approach for IGE to achieve millisecond-scale update of channel gain estimates. Theorem~\ref{theorem:full-rank} measures the interference between each pair of RBs and thus requires the number of blocks for IGE to be no less than the total number of RBs to form a full-rank transmit power matrix. However, when the number of RBs in a multi-cell network is large, the IGE process will become time-consuming, making it unable to track the channel dynamics in a timely manner. To accelerate the IGE process, our core insight is to reduce the number of interferers for each RB by neglecting the weak ones. Specifically, for each RB, we consider only the RBs with strong interference and neglect the ones with weak interference. As indicated in~\cite{ini_spectral_dist}, the INI between two RBs decrease sharply with their spectral distance. For each RB, we can distinguish its strong and weak interfering RBs based on their spectral distances. To characterize how fast interference channel strength drops with the spectral distance, we derive the following theorem.
\begin{theorem}\label{theorem:spectral_distance}
The strength difference between two equivalent channel gains from RBs of different numerologies to an UE is inversely proportional to the square of the number of subcarriers in a RB.
\end{theorem}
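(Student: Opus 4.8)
The plan is to work from the closed-form expression for the subcarrier-level equivalent channel gain $\hat{s}_{(k,r),(z,j)}^{\mu_l \rightarrow \mu_d}$ established in the proof of Lemma~\ref{lemma:linearity_slow_moving}, and then track how the resource-block aggregation of Corollary~\ref{corollary1} scales that gain with $B$. When the interfering RB $l$ and the victim RB $d$ use different numerologies, the demodulated leakage coefficient between an interfering subcarrier $r$ and a victim subcarrier $j$ is a normalized Dirichlet kernel of their spectral separation, so $\hat{s}_{(k,r),(z,j)}^{\mu_l \rightarrow \mu_d}$ is the squared magnitude of that kernel weighted by the channel power. First I would record the single property of this kernel that the argument actually needs: away from the main lobe its amplitude decays inversely with the spectral distance between the two subcarriers, hence the per-subcarrier interference \emph{power} decays as the inverse square of that distance.

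Next I would substitute this into the RB-level gain $s_{(k,l),(z,d)} = \frac{1}{B^2}\sum_{r}\sum_{j}\hat{s}_{(k,r),(z,j)}^{\mu_l \rightarrow \mu_d}$ and measure spectral separation in RB units. Writing the center-to-center separation of the two RBs as $D$ resource blocks, every subcarrier pair $(r,j)$ in the double sum is separated by $DB + O(B)$ subcarriers, so each summand is of order $1/(DB)^2$; the $B^2$ summands therefore sum to order $1/D^2$, and the explicit prefactor $1/B^2$ leaves $s_{(k,l),(z,d)} = \Theta\!\left(1/(D^2B^2)\right)$. That is, for a fixed RB separation the cross-numerology gain is inversely proportional to $B^2$. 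The stated strength \emph{difference} then follows as a finite-difference step of the same estimate: moving to a neighbouring RB changes the center separation by one RB ($DB \to (D+1)B$ subcarriers), and $\frac{1}{(DB)^2}-\frac{1}{((D+1)B)^2} = \frac{1}{B^2}\big(D^{-2}-(D+1)^{-2}\big)$ isolates the common $1/B^2$ factor while the bracket is an $O(1)$ constant independent of $B$.

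The delicate part is not the $B$-counting but making the far-field sinc estimate uniform and legitimate across numerologies. The two RBs live on grids whose subcarrier spacings differ by the factor $2^{\mu_d-\mu_l}$, the Dirichlet kernel is built from the mismatched FFT lengths $N^{\mu_l}$ and $N^{\mu_d}$ (together with the $2^{i}$ power scalings already present in Lemma~\ref{lemma:linearity_slow_moving}), and the overlap window over which the leakage is accumulated must be handled consistently with the CP-exceeds-TO convention used there. I would therefore prove the amplitude bound $|\cdot| \lesssim 1/\text{distance}$ directly from the geometric-series form of the kernel, check that the $O(B)$ spread of subcarriers inside each RB perturbs the leading term only at relative order $1/D$, and restrict the claim to separations $D \ge 1$ where the victim lies outside the interferer's main lobe so that the sinc tail, rather than the main lobe, governs the gain. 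A summation-to-integration control of the residual, e.g.\ via monotonicity of the tail, then upgrades this order-of-magnitude count into the advertised $1/B^2$ proportionality.
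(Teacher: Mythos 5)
Your starting point coincides with the paper's: the closed-form subcarrier-level gain from the proof of Lemma~\ref{lemma:linearity_slow_moving}, whose Dirichlet-kernel tail makes the per-subcarrier interference power decay as the inverse square of the spectral distance (the paper gets this by the small-angle approximation in the denominator, yielding $s_{(k,m),(z,d)}^{i\to i'} \lesssim \frac{|H_{k,z}[m]|^2}{\pi^2}\,\frac{2^{2i'}}{|\Delta_{d,m}^{i\to i'}|^2}$). But the paper never aggregates to the RB level at all: it stays with two subcarriers, lower-bounds the zero-distance gain $s_0 \geq 2|H_{k,z}[m]|^2/\pi^2$, and reads the ``strength difference'' as the \emph{ratio} $s_{(k,m),(z,d)}^{i\to i'}/s_0$, which at spectral distance $\Delta = B$ (one RB apart) gives the $\tfrac{1}{2B^2}$ relative drop -- exactly the ``more than 20\,dB'' figure later used to justify keeping only adjacent interferers. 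Your reading of ``difference'' as an arithmetic finite difference of RB-level gains at separations $D$ and $D+1$ is both weaker (it is trivially implied once each gain individually scales as $1/B^2$) and not the statement the paper needs downstream; without the comparison against $s_0$ you never produce a relative drop.

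More importantly, your aggregation step has a genuine gap at precisely the separation that matters. For $D=1$ (adjacent RBs of different numerologies), the claim that ``every subcarrier pair is separated by $DB+O(B)$ subcarriers, so each summand is of order $1/(DB)^2$'' fails: pairs near the shared band edge are separated by $O(1)$ subcarriers and sit in or near the kernel's main lobe, so those summands are $\Theta(1)$ relative to the channel power, not $\Theta(1/B^2)$. Counting pairs at separation $s$ gives roughly $\sum_{s=1}^{B} s \cdot s^{-2} = \Theta(\log B)$ for the double sum, hence an RB-level gain of order $\log B / B^2$ rather than $1/B^2$; your hedge that the in-RB spread perturbs the leading term ``only at relative order $1/D$'' is exactly the step that breaks when $D=1$, and restricting to ``outside the main lobe'' contradicts $D=1$ for the boundary pairs. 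Your count is sound for $D\ge 2$, but the theorem is invoked for the one-RB-apart case (this is also why the paper's evaluation singles out boundary RBs as the strongest INI victims), so the restriction does not rescue the argument. The paper's subcarrier-level route sidesteps all of this, since the two subcarriers it compares are exactly $B$ apart and hence safely in the kernel tail.
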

\begin{proof}
    Please refer to Appendix~\ref{sec:proof_spectral_distance}.
\end{proof}
For a typical subcarrier count of 12 in a RB, a spectral distance of one RB may result in the interference channel strength to drop more than 20dB, consistent with our simulation results in Section~\ref{sec:perf_eval}. Thus,  considering only the most adjacent RBs is a reasonable tradeoff between speed and accuracy for IGE. More details on the IGE performance will be covered in Section~\ref{sec:perf_eval}.

\begin{figure}[!t]
    \centering
        \includegraphics[scale=0.45]{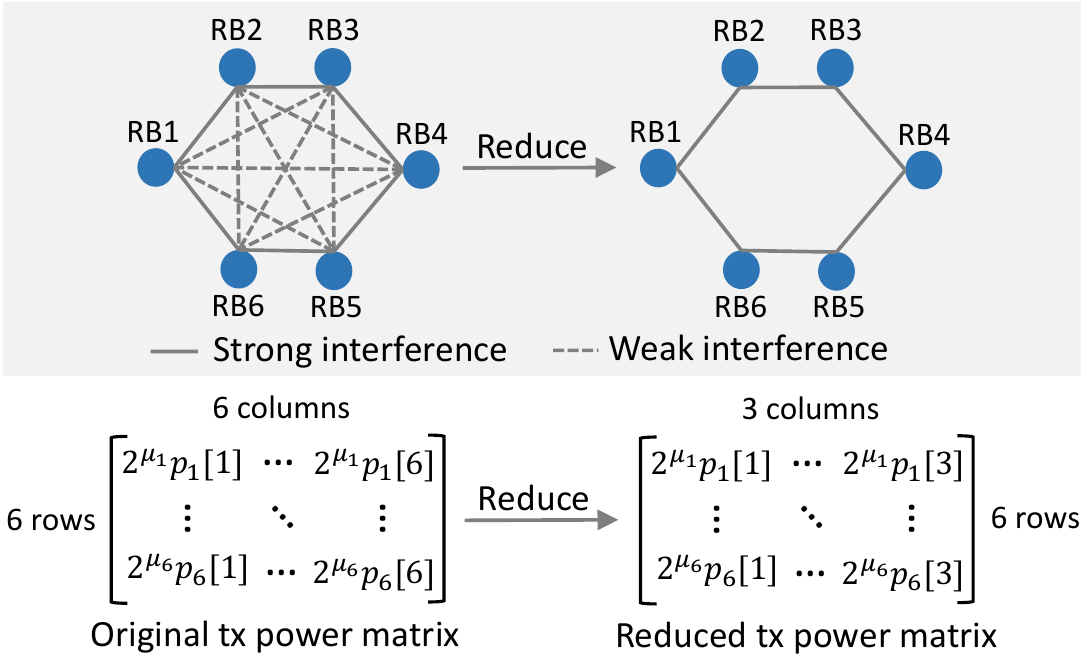}
    \caption{Speeding up IGE by neglecting weak interference between RBs}\label{fig:graph_reduction}
    \vspace{-1em}
\end{figure}

Fig.~\ref{fig:graph_reduction} shows an example involving 6 RBs, where interference exists between each pair of RBs. If the interference between each RB pair is considered, we need to construct a $6 \times 6$ transmit power matrix, which takes 6 blocks to complete the IGE process. To reduce the time for IGE, we only consider the RBs with strong interference. In this example, each RB is assumed to be significantly affected by only two other RBs. As a result, three blocks (two for the interfering RBs and one for the RB itself) are sufficient for the IGE process. The dimension of the original transmit power matrix is reduced to $6 \times 3$. The reduced transmit power matrix has a maximum rank of 3, which cannot guarantee a unique solution for the channel gains. Instead, we require the transmit power vectors of each RB and its interfering RBs to form a full-rank matrix for IGE. For a RB in a $m$-cell network, if it is affected by $n_i$ strong interfering RBs in cell $i$, the total number of interfering RBs is $d = \sum_{i=1}^m n_i$. In other words, it takes $d + 1$ blocks for the IGE process to complete. Using a larger $d$ improves the accuracy of IGE by diminishing the influence of the neglected weak interfering RBs, but also lengthens the IGE process.

For simplicity, we consider $d$ interfering RBs for each RB. We want to construct a $n_{tot}\times d$ transmit power matrix such that each $d \times d$ submatrix formed by the RBs and their interfering RBs has full rank, where $n_{tot} \geq d$ is the total number of RBs. Such a transmit power matrix can be proved to exist as follows.
Given a group of mutually interfering RBs, it is feasible to construct a reduced transmit power matrix to measure the equivalent interference channel gains from these RBs to UEs.
As indicated by Theorem~\ref{theorem:full-rank}, the equivalent interference channels from RBs to a UE are measurable if the transmit power matrix is full rank. Let each RB have an equal number of interfering RBs, denoted as $d$. We can construct a $n_{tot} \times d$ reduced transmit power matrix with $n_{tot} \geq d$ as follows. We first construct a $d\times d$ full-rank transmit power matrix and denote the $i$-th row as $\bm{r}_i$. The $(i+1)$-th row can be constructed as $\bm{r}_{d+1} = \sum_{i=1}^d l_i \bm{r}_i$, where all $l_i$'s are nonzero. It is easy to see that $\bm{r}_{d+1}$ is linearly independent with any $d - 1$ vectors among $\bm{r}_1, \dots, \bm{r}_d$. In other words, any $d$ vectors among $\bm{r}_1,\dots,\bm{r}_{d+1}$ can form a full-rank matrix. Similarly, we can construct $r_{d+2}$ with a linear combination of $d$ vectors among $\bm{r}_1,\dots, \bm{r}_{d+1}$. Following this approach, we can generate $n_{tot}$ vectors to compose a $n_{tot}\times d$ reduced transmit power matrix whose $d\times d$ submatrices are all full-rank.


\begin{figure}[!t]
    \centering
        \includegraphics[scale=0.45]{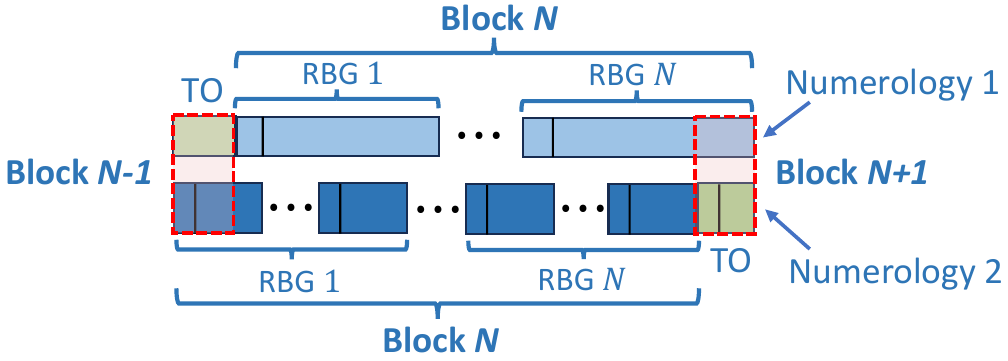}
    \caption{Inter-block interference when TO exceeds CP, where TO is mainly due to the time synchronization errors between BSs.}\label{fig:to_error}
    \vspace{-1em}
\end{figure}

\subsection{Error Analysis for IGE}
\label{sec:error_analysis}

Let $\bm{\overbar{p}}_i^{rx}$ and $\Delta \bm{\overbar{p}}_i^{rx}$ be the average receive power and the difference between the expected and average receive powers of UE $i$. We can rewrite Eq.~(\ref{eq:linearity_matrix_form}) as
\begin{equation*}
\mathbf{P}\bm{s}_i = \bm{\overbar{p}}_i^{rx} + \Delta \bm{\overbar{p}}_i^{rx} + \bm{v}_i.
\end{equation*}
Let $\bm{\hat{s}}_i$ be the estimated equivalent channel gain vector of UE $i$. We can simply estimate $\bm{\hat{s}}_i$ as
\begin{equation}\label{eq:least_square}
\bm{\hat{s}}_i = \mathbf{P}^{-1}\bm{\overbar{p}}_i^{rx}.
\end{equation}
Based on the perturbation theory in~\cite{chandrasekaran1991perturbation}, we can bound the estimation error as
\begin{equation}\label{eq:error_bound}
    \frac{\lVert\bm{\hat{s}}_i - \bm{s}_i\rVert}{\lVert \bm{s}_i \rVert} \leq \kappa(\mathbf{P})\left(\frac{\lVert\Delta \bm{\overbar{p}}_i^{rx}\rVert}{\lVert\bm{\overbar{p}}_i^{rx}\rVert} + \frac{\lVert \bm{v}_i\rVert}{\lVert\bm{\overbar{p}}_i^{rx}\rVert}\right),
\end{equation}
where $\kappa(\mathbf{P})$ is the condition number of $\mathbf{P}$. This indicates that lowering the estimation error requires us to reduce $\kappa(\mathbf{P})$ and $\Delta \bm{\overbar{p}}_i^{rx}$. The condition number of $\mathbf{P}$ can be reduced by carefully choosing the transmit power as in Section~\ref{sec:joint_opt}, while $\Delta \bm{\overbar{p}}_i^{rx}$ can be reduced by using a larger block based on Theorem~\ref{theorem:bennet}.
When TO is less than CP, the major source of error for the receive power is due to the limited sample size of received symbols. When TO is larger than CP, we further need to consider the inter-block interference as shown in Fig.~\ref{fig:to_error}, where RBs using different numerologies are grouped into blocks of the same length and the transmit powers on these RBs vary across blocks for IGE. Inter-block interference causes the leading or trailing symbols in a block to receive signals sent with a different transmit power, which results in another source of error in receive power. We analyze the impact of this error to demonstrate the robustness of our approach to TO.

Following the derivation in Appendix~\ref{section:linearity_slow_moving}, we can derive a similar linear relation between the transmit and receive powers on RBs, including the inter-block interference, as follows:
\begin{align}
    2^{\mu_d}p_{z}^{rx}[d] = &\sum_{k \in \mathcal{K}}\sum_{l\in\mathcal{RB}_k} 2^{\mu_l}p_{k}^{tx}[l] s'_{(k,l),(z,d)} \notag \\
    & + \sum_{k \in \mathcal{K}}\sum_{l\in\mathcal{RB}_k} 2^{\mu_l}\Delta p_{k}^{tx}[l] q_{(k,l),(z,d)} + V_z[d],
\end{align}
where $\Delta p_{k}^{tx}[l] = p_{k}^{tx,(p)}[l] - p_{k}^{tx}[l]$, $s'_{(k,l),(z,d)} = s_{(k,l),(z,d)} + q_{(k,l),(z,d)}$, $p_{k}^{tx,(p)}[l]$ is the expected transmit power on the $l$-th RB of the previous OFDM symbol, and $q_{(k,l),(z,d)}$ is the new equivalent channel gain due to inter-block interference, which is proportional to TO, $\zeta_k$, and inversely proportional to the number of RBGs in a block, denoted as $n_{RBG}$. We derive the expression for $q_{(k,l),(z,d)}$ in a two-cell network of numerologies $i$ and $i'$ in Appendix~\ref{sec:q_vs_s}. For RBs $l$ and $d$ close in frequency band, $q_{(k,l),(z,d)}$ can be approximately written as
\begin{equation}
q_{(k,l),(z,d)} \approx \frac{N_L}{BN^i N^{i'}n_{RBG}}\sum_{r=B(l-1)+1}^{Bl}|H_{k,z}[r]|^2,
\end{equation}
where $H_{k,z}[r]=\sum_{l=0}^{L-1}h_{k,z}^{(l)}e^{\frac{-j2\pi rl}{N^i}}$ and $N_L = L+\zeta_k-N_{cp}^{i'}$. For strong interference channels, e.g., the co-channel interference between cells, the interference channel gain $s_{(k,d),(z,d)}$ can be as 
\begin{equation*}
    s_{(k,d),(z,d)} = \frac{N^i - \zeta_k}{BN^i N^{i'}}\sum_{r=B(l-1)+1}^{Bl}|H_{k,z}[r]|^2.
\end{equation*}
We can see that the ratio of $q_{(k,d),(z,d)}$ and $s_{(k,d),(z,d)}$ is very small for strong interference channels. However, the inter-block interference may still affect weak equivalent channel gains. In the power-domain, the impact of TO on channel estimation depends on the time fraction of TO to the block length. Since CP is typically 7\% of the OFDM symbol and a block includes a few OFDM symbols, the ratio of $q$ and $s$ is very small for nearby RBs even when TO exceeds CP, indicating that our approach is robust to TO. Fig.~\ref{fig:q_vs_s} shows the ratios of $q$ and $s$ for channels of different magnitudes in a three-cell network of two numerologies 15kHz and 30kHz, where each block includes two RBGs (see Section~\ref{sec:setup} for detailed setup) and $s_{max}$ is the strongest equivalent channel gain. We can see that even when TO is twice the CP length, the $q/s$ ratio is as low as 3\% for the interference channels 30dB less than $s_{max}$.



\begin{figure}[!t]
    \centering
        \includegraphics[scale=0.42]{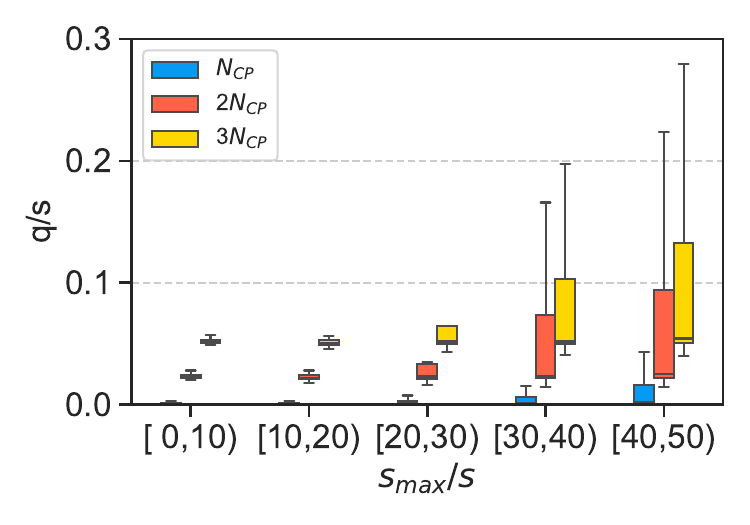}
    \caption{The ratio of equivalent channel gains $q$ and $s$}\label{fig:q_vs_s}
\end{figure}

\section{Joint Optimization for Interference Graph Estimation and Resource Allocation}
\label{sec:joint_opt}

In this section, we integrate IGE with a resource allocation problem aiming to maximize the energy efficiency of the multi-cell MN networks. We first present the formulation of the joint optimization problem and then propose a heuristic solution with provable convergence.

\subsection{Problem Formulation}
Our goal is to simultaneously 1) maximize the energy efficiency of the multi-cell MN system under the constraint that user traffic demand is satisfied and 2) minimize the condition number of the transmit power matrix for the accurate estimation of the interference graph. Let $\mathcal{U}_k$ be the set of UEs served by BS $k$, and $\mathcal{T}$ be the set of time blocks. We define network energy efficiency following the definition in~\cite{ee_def} as
\begin{equation*}
    E(\mathbf{P}) =\frac{f(\mathbf{P})}{g(\mathbf{P})} = \frac{\sum\limits_{k \in \mathcal{K}}\sum\limits_{z \in \mathcal{U}_k}\sum\limits_{d \in \mathcal{RB}_k}\sum\limits_{l \in \mathcal{T}}log(1+\text{SINR}_{k,z}[d][l])}{\sum\limits_{k \in \mathcal{K}}\sum\limits_{z \in \mathcal{U}_k}\sum\limits_{d \in \mathcal{RB}_k}\sum\limits_{l \in \mathcal{T}}p^{tx}_k[d][l]},
\end{equation*}
where $p_k^{tx}[d][l]$ is the transmit power of BS $k$ on RB $d$ at block $l$ and $\text{SINR}_{k,z}[d][l]$ is the signal-to-interference-plus-noise ratio of UE $z$ served by BS $k$ on RB $d$ at block $l$. Let $\delta_{k,z}[d][l]$ indicate if BS $k$ is sending to UE $z$ on RB $d$ at block $l$ and $p_{k,z}^{tx}[d][l]$ be the transmit power of BS $k$ to UE $z$ on RB $d$ at block $l$, where $p_{k,z}^{tx}[d][l] = 0$ if $\delta_{k,z}[d][l] = 0$. We have $p_k^{tx}[d][l] = \sum_{z\in\mathcal{U}_k}p^{tx}_{k,z}[d][l]$ and express the SINR as
\begin{align*}
    &\text{SINR}_{k,z}[d][l]=\notag\\
    &\frac{p_{k,z}^{tx}[d][l]s_{(k,d),(z,d)}}{\sum\limits_{\substack{k' \in \mathcal{K},\\d' \in \mathcal{RB}_{k'}}}  p_{k'}^{tx}[d'][l]s_{(k',d'),(z,d)}-p_{k,z}^{tx}[d][l]s_{(k,d),(z,d)}+V_{k,z}[d]},
\end{align*}
where $V_{k,z}[d]$ is the noise power on the $d$-th RB of UE $z$ under BS $k$. Based on the buffer status reports, we can determine user traffic demand and infer the required SINR requirements to satisfy user demands.

The joint optimization problem can be formulated as 
\begin{align}
(\mathcal{P}1) \ \max_{\mathbf{P}, \bm{\delta}} \ & \left[E(\mathbf{P}), -\sum_{k\in\mathcal{K}}\sum_{d\in\mathcal{RB}_k}\kappa(\mathbf{P}_k[d])\right] \notag \\
\textrm{s.t.} \ & (C_1)\ \sum\limits_{z \in \mathcal{U}_k}\delta_{k,z}[d][l] \leq 1,  \forall k, d, l, \notag \\
& (C_2)\ \sum\limits_{d \in \mathcal{RB}_k}\delta_{k,z}[d][l] \geq 1,  \forall k, z, l, \notag \\
& (C_3)\ \delta_{k,z}[d][l] \in \{0, 1\}, \forall k,z, \notag \\
& (C_4)\ \text{SINR}_{k,z}[d][l] \geq \gamma_{k,z} \delta_{k,z}[d][l], \forall k,z, \notag\\
& (C_{5})\ 0\leq p^{tx}_{k,z}[d][l] \leq \delta_{k,z}[d][l]P^{max}_k[d], \forall k,z, \notag\\
& (C_6)\ 0\leq \sum_{z\in \mathcal{U}_k}\sum_{d\in \mathcal{RB}_k} p^{tx}_{k,z}[d][l] \leq P^{max}_k, \forall k,l, \notag\\
&  (C_7)\ rank(\mathbf{P}_k[d]) = N_{RB} + 1, \forall k,d, \notag
\end{align}
where $\mathbf{P}_k[d] \in \mathbb{R}^{(N_{RB} + 1) \times (N_{RB} + 1)}$ is a transmit power matrix consisting of the transmit powers of the $d$-th RB of BS $k$ and its $N_{RB}$ interfering RBs, $\kappa(\mathbf{P}_k[d])$ is the condition number of $\mathbf{P}_k[d]$, $\delta_{(k, z)}[d][l]$ indicates if UE $z$ under BS $k$ is assigned RB $d$ at the $l$-th block, $\gamma_{k,z}$ is the required SINR for UE $z$ under BS $k$, $P^{max}_k[d]$ is the maximum transmit power for the $d$-th RB of BS $k$, $P^{max}_k$ is the total transmit power limit for BS $k$, $N_{RB}$ is the number of interfering RBs for each RB. 

\subsection{Proposed Solution}

\begin{figure}[!t]
    \centering
        \includegraphics[scale=0.45]{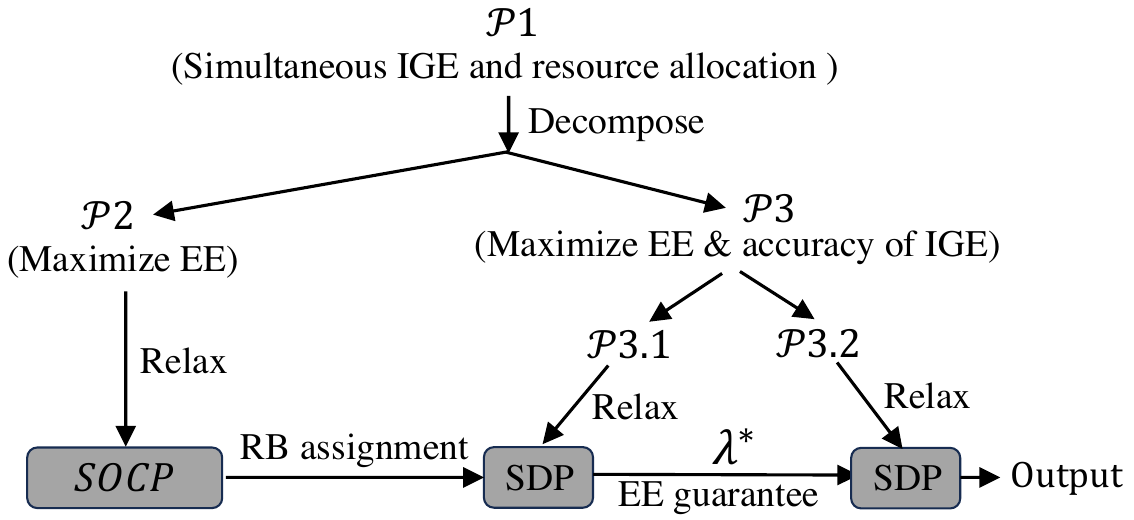}
    \caption{Workflow of our proposed solution}\label{fig:workflow}
    \vspace{-1em}
\end{figure}

$\mathcal{P}1$ is a multi-objective rank-constrained mixed-integer nonlinear programming (MINLP) problem. \textcolor{black}{Due to the nonlinearity of both the energy efficiency and condition number, combining them using the linear scalarization method would further complicate the problem. Instead, we use the $\epsilon$-constraint method to transform $\mathcal{P}1$ into coupled single-objective problems.}
As shown in Fig.~\ref{fig:workflow}, we first decompose $\mathcal{P}1$ into two sub-problems: an MINLP problem without the rank constraint ($\mathcal{P}2$) and a multi-objective problem with the rank constraints ($\mathcal{P}3$). $\mathcal{P}2$ aims to determine the assignment of RBs for each BS at different time blocks that maximizes the energy efficiency (EE), i.e., 
 \begin{align}
(\mathcal{P}2) \ \max\limits_{\mathbf{P}, \boldsymbol{\delta}} \ &E(\mathbf{P})\nonumber \\
\text{s.t.} \ & C_1\mbox{-}C_6, \nonumber
\end{align}
where $\boldsymbol{\delta}$ controls the assignment of RBs for UEs.
After that, $\mathcal{P}3$ takes the resulting RB assignment from $\mathcal{P}2$ as input and sequentially maximizes energy efficiency and the accuracy of IGE. $\mathcal{P}3$ is decomposed into two subproblems: 
\begin{align}
(\mathcal{P}3.1) \ \max\limits_{\mathbf{P}} \ & E(\mathbf{P}) \nonumber \\
\text{s.t.} \ & C_4 \mbox{-} C_7, \nonumber
\end{align}
and 
\begin{align}
(\mathcal{P}3.2) \ \min\limits_{\mathbf{P}} \ & \sum_{k\in\mathcal{K}}\sum_{d\in\mathcal{RB}_k}\kappa(\mathbf{P}_k[d]) \nonumber \\
\text{s.t.} \ & C_4 \mbox{-} C_7, \nonumber \\
& E(\mathbf{P}) \geq r\lambda^*, \nonumber
\end{align}
where $\lambda^*$ is the energy efficiency obtained by solving $\mathcal{P}3.1$ and is used in $\mathcal{P}3.2$ to provide the energy efficiency guarantee, and \textcolor{black}{$r\in (0, 1]$ is a tuning parameter to control the tradeoff between energy efficiency and condition number. 
We can solve $\mathcal{P}3.2$ under different $r$'s to obtain the Pareto front, as will be discussed in Section~\ref{sec:perf_design_param}.} Unlike $\mathcal{P}2$, $\mathcal{P}3.1$ and $\mathcal{P}3.2$ do not include binary variables and aim to optimize EE and IGE with power control. We will solve $\mathcal{P}2$ and $\mathcal{P}3$ in sequence with various relaxation and approximation techniques.

\subsection{Solution to $\mathcal{P}2$}
We want to use the Dinkelbach’s method~\cite{fractional_programming} to solve $\mathcal{P}2$, which requires the concavity of the objective function and convexity of the constraints. To satisfy these requirements, we need to transform the fractional objective function, relax the binary variables, and approximate the non-convex items.

\subsubsection{Transformation of the Fractional Objective}
We first transform the fractional objective into a subtractive form. After the transformation, the objective function becomes
\begin{align}\label{eq:objective_func}
\max\limits_{\bm{P}, \boldsymbol{\delta}} \ f(\mathbf{P}) - \lambda g(\mathbf{P}),
\end{align}
where $\lambda$ is initialized to a feasible solution for energy efficiency and updated after each iteration until convergence.
Since the Dinkelbach's method requires the concavity of $f(\mathbf{P})$, the convexity of $g(\mathbf{P})$, and the convexity of the feasible set, we need to solve the non-concavity issue in the objective function and the non-convexity issue in the constraints.

\subsubsection{Continuous Relaxation}
We eliminate the mixed-integer property of the original problem by converting the binary variables $\delta_{k,z}[d][l]$ into continuous variables between 0 and 1. To ease the conversion from the relaxed continuous variables back to binary numbers, we introduce a penalty term in the objective function  pushing $\delta_{k,z}[d][l]$ towards 0 or 1. After adding this penalty term, we rewrite objective function as
\begin{align}
\max\limits_{\bm{P}, \boldsymbol{\delta}} \ &f(\mathbf{P}) - \lambda g(\mathbf{P}) \notag \\ 
&+ \xi\sum_{k \in \mathcal{K}}\sum_{z \in \mathcal{U}_k}\sum_{d \in \mathcal{RB}_k}\sum_{l \in \mathcal{T}}\delta_{k,z}[d][l](\delta_{k,z}[d][l]-1). \notag
\end{align}


\subsubsection{Convex Approximation}
We want to solve the non-concavity issue in the objective function and the non-convexity issue in the constraint $C_4$. We rewrite $f(\mathbf{P})$ using the D.C. decomposition~\cite{dc_approx} as
\begin{equation*}
    f(\mathbf{P})=f_1(\mathbf{P})-f_2(\mathbf{P}),
\end{equation*}
where 
\begin{align*}
     &f_1(\mathbf{P}) = \sum_{k \in \mathcal{K}}\sum_{z \in \mathcal{U}_k}\sum_{d \in \mathcal{RB}_{k}}\sum_{l \in \mathcal{T}} \log(q(\mathbf{P})), \notag \\
    &f_2(\mathbf{P}) = \sum_{k \in \mathcal{K}}\sum_{z \in \mathcal{U}_k}\sum_{d \in \mathcal{RB}_k}\sum_{l \in \mathcal{T}} \log(q(\mathbf{P})
    -p_k[d][l]s_{(k,d), (z,d)}), \\
    &q(\mathbf{P}) = \sum_{k'\in \mathcal{K}}\sum_{d' \in \mathcal{RB}_{k'}}p_{k'}[d'][l]s_{(k',d'),(z,d)} + V.
\end{align*}
We define 
\begin{align*}
&f_1(\mathbf{P},\bm{\delta}) = f_1(\mathbf{P}) - \lambda g(\mathbf{P}) - \xi(\sum_{k \in \mathcal{K}}\sum_{z \in \mathcal{U}_k}\sum_{d \in \mathcal{RB}_k}\sum_{\beta \in \mathcal{T}}\delta_{kz}[d][l]),\\
&f_2(\mathbf{P},\bm{\delta}) = f_2(\mathbf{P}) - \xi(\sum_{k \in \mathcal{K}}\sum_{z \in \mathcal{U}_k}\sum_{d \in \mathcal{RB}_k}\sum_{l \in \mathcal{T}}\delta_{kz}[d][l]^2),
\end{align*}
and rewrite the objective function in Eq.~(\ref{eq:objective_func}) as
\begin{equation}\label{eq:transformed_obj}
\max\limits_{\boldsymbol{P}, \boldsymbol{\delta}} \ f_1(\textbf{P},\bm{\delta})- f_2(\textbf{P},\bm{\delta}).
\end{equation}
To make the objective function concave, $f_2(\mathbf{P},\bm{\delta})$ should be transformed to be affine by employing the first-order Taylor approximation, i.e.,
\begin{align}
\widetilde{f}_2(\textbf{P},\bm{\delta})\triangleq &f_2(\textbf{P}^{(t-1)},\bm{\delta}^{(t-1)}) \notag\\
&+\nabla_{f_2}^T(\textbf{P}^{(t-1)},\bm{\delta}^{(t-1)})[(\textbf{P},\bm{\delta})-(\textbf{P}^{(t-1)},\bm{\delta}^{(t-1)})]. \notag
\end{align}



To relax $C_4$, we first decompose it into two constraints, i.e.,
\begin{align*}
&(C_{4.1}) \ G_{k,z}[d][l] = \notag\\
&{\sum\limits_{k' \in \mathcal{K}, d' \in \mathcal{RB}_{k'}}  p_{k'}[d'][l]s_{(k',d'),(z,d)}-p_k[d][l]s_{(k,d),(z,d)}+V}
\end{align*}
and
\begin{equation*}
(C_{4.2}) \ \frac{p_k[d][l]}{\gamma_{k,z}} s_{(k,d),(z,d)} \geq \delta_{k,z}[d][l]G_{k,z}[d][l].
\end{equation*}
We use the successive convex approximation (SCA) to convert $C_{4.2}$ to
\begin{equation*}
(C_{4.3}) \ \frac{p_k[d][l]}{\gamma_{k,z}} s_{(k,d),(z,d)} \geq \frac{\phi_{k,z}[d][l]}{2}\delta_{k,z}[d][l]^2+\frac{G_{k,z}[d][l]^2}{2\phi_{k,z}[d][l]},
\end{equation*}
where $\phi_{k,z}[d][l]$ is a constant larger than 0 and the equality holds when $\phi_{k,z}[d][l]=\frac{G_{k,z}[d][l]}{\delta_{k,z}[d][l]}$.


Combining all the above operations, we can rewrite $\mathcal{P}2$ as
\begin{align}
(\mathcal{P}2^*) \ \max\limits_{\mathbf{P}, \bm{\delta}, \mathbf{G}} \ &f_1(\textbf{P},\bm{\delta})- \widetilde{f}_2(\textbf{P},\bm{\delta}) \notag \\
\text{s.t.} \ & C_1, C_2, C_{4.1},C_{4.3}, C_5, C_6, \notag \\
              & \delta_{k,z}[d][l] \in [0, 1], \forall k,z,d,l
\nonumber
\end{align}
where $\mathbf{G}$ represents for variables $G_{k,z}[d][l]$'s. It can be seen that $\mathcal{P}2^*$ is a second-order cone programming (SOCP) problem. The solution to the original problem $\mathcal{P}2$ can be achieved by iteratively solving $\mathcal{P}2^*$ until convergence. After $\mathcal{P}2^*$ is solved, we update $\lambda$ in $f_1(\mathbf{P}, \bm{\delta})$ and solve the updated $\mathcal{P}2^*$. This process is repeated until convergence.


\subsection{Solution to $\mathcal{P}3$}

\subsubsection{Solving $\mathcal{P}3.1$}
After $\boldsymbol{\delta}$ is determined, $C_4$ becomes a linear constraint and $\mathcal{P}3$ degenerates to a rank-constrained nonlinear programming problem. With the transformed objective in Eq.~(\ref{eq:transformed_obj}), we can rewrite $\mathcal{P}3$ as
\begin{align*}
(\mathcal{P}3.1) \ \max\limits_{\mathbf{P}} \ & \Delta f(\textbf{P}, \hat{\bm{\delta}}) \\
\text{s.t.} \ & C_4 \mbox{-} C_7, \notag 
\end{align*}
where $\Delta f(\textbf{P}, \hat{\bm{\delta}}) = f_1(\textbf{P},\hat{\bm{\delta}})- \widetilde{f}_2(\textbf{P}, \hat{\bm{\delta}})$.
As shown in \cite{b22}, rank-constrained problems can be gradually approximated with a sequence of semidefinite programming (SDP) problems. 
Let $n_b = N_{RB} + 1$ and $\mathbb{S}^{n_b}$ be the set of symmetric $n_b \times n_b$ matrices. We can have $rank(\mathbf{P}_k[d]) = n_b$ if and only if there exists $\mathbf{Z}_k[d] \in \mathbb{S}^{n_b}$ and $\mathbf{U}_k[d] \in \mathbb{S}^{2n_b}$ such that
\begin{equation*}
    rank(\mathbf{Z}_k[d]) = n_b, 
\end{equation*}
and
\begin{equation*}
    rank\left(\mathbf{U}_k[d] \right) \leqslant n_b,\ \mathbf{U}_k[d] = \left[ \begin{matrix}
 \mathbf{I}_{n_b} & \mathbf{P}^T_k[d] \\
 \mathbf{P}_k[d] & \mathbf{Z}_k[d] \\
  \end{matrix} \right].
\end{equation*}
It has been proved in~\cite{b22} that when $e = 0$, $rank(\mathbf{P}_k[d]) = n_b$ is equivalent to 
\begin{equation*}
    R_{k}[d]^{T}\mathbf{Z}_k[d]R_k[d] > 0, e\mathbf{I}_{n_b}-\mathbf{W}_k[d]^T\mathbf{U}_k[d]\mathbf{W}_k[d] \succeq 0,
\end{equation*}
where $\mathbf{U}_k[d] = \left[ \begin{matrix}
 \mathbf{I}_{n_b} & \mathbf{P}_k[d]^T \\
 \mathbf{P}_k[d] & \mathbf{Z}_k[d] \\
  \end{matrix} \right]$, $R_k[d] \in \mathbb{R}^{n_b}$ is the eigenvector corresponding to the smallest eigenvalue of $\mathbf{Z}_k[d]$, and $\mathbf{W}_k[d] \in \mathbb{R}^{2n_b \times n_b}$ are the eigenvectors corresponding to the $n_b$ smallest eigenvalues of $\mathbf{U}_k[d]$.



With the SDP approximations, we can solve $\mathcal{P}3.1$ by iteratively solving the following problem:
\begin{subequations}\label{eq:iterative_sdp}
\begin{align}
\max\limits_{\mathbf{P}^{(t)}, \mathbf{Z}^{(t)}, \bm{e}^{(t)}} &\Delta f(\textbf{P}, \hat{\bm{\delta}}) - w^{(t)}\sum_{k\in\mathcal{K}}\sum_{d\in\mathcal{RB}_k}e^{(t)}_k[d] \\
\text{s.t.}\quad & R^{(t-1)}_k[d]^T\mathbf{Z}^{(t)}_k[d]R^{(t-1)}_k[d] > 0 \label{eq:sdq_s}, \\
& e^{(t)}_k[d]\mathbf{I}_{n_b} - \mathbf{W}^{(t-1)}_k[d]^T\mathbf{U}^{(t)}_k[d]\mathbf{W}^{(t-1)}_k[d]\succeq 0,  \\
& 0 \leq e^{(t)}_k[d] \leq e^{(t-1)}_k[d], \label{eq:sdp_e} \\
& \mathbf{Z}^{(t)}_k[d] \succeq \boldsymbol{0}, \mathbf{U}^{(t)}_k[d] \succeq \boldsymbol{0}, \\
& C_4, C_5, C_6 
\end{align}
\end{subequations}
where $\mathbf{P}^{(t)}$ is the transmit power matrix at the $t$-th iteration, $\mathbf{W}^{(t-1)}_k[d]$ includes the eigenvectors corresponding to $n_b$ smallest eigenvalues of $\mathbf{U}^{(t-1)}_k[d]$, and $w^{(t)}$ is a weighting factor increasing with $t$. 
At the $0$-th iteration, the initial solution can be obtained by solving the problem without constraints (\ref{eq:sdq_s})-(\ref{eq:sdp_e}), and $e^{(0)}_k[d]$ is the $n_b$-th smallest eigenvalue of $\mathbf{U}^{(0)}_k[d]$.

\subsubsection{Solving $\mathcal{P}3.2$}
To address nonconvexity of $\kappa(\mathbf{P}_k[d])$'s in the objective function, we express it in the form of semi-definite matrices using singular values. Since $\kappa(\mathbf{P}_k[d]) = \left(\frac{\lambda_{max}(\mathbf{P}_k[d])}{\lambda_{min}(\mathbf{P}_k[d])}\right)^{1/2}$ is the square root of the ratio of the largest and smallest singular values, we can rewrite $\mathcal{P}3.2$ as
\begin{align}
(\mathcal{P}3.2^*) \ \min_{\mathbf{P},\gamma} \ & \sum_{k\in\mathcal{K}}\sum_{d\in\mathcal{RB}_k}\eta_k[d] \nonumber \\
\text{s.t.} \ & C_4\mbox{-}C_7, C_8, C_9 \nonumber
\end{align}
where $C_8$ constrains the ratio between the largest and smallest singular values as
\begin{equation*}
(C_8) \ \mu\mathbf{I}_{n_b}  \prec \mathbf{P}_k[d]^T\mathbf{P}_k[d] \prec \eta_k[d]\mu \mathbf{I}_{n_b}, \forall k, d,
\end{equation*}
and $C_9$ is the linearized constraint for EE guarantee as
\begin{equation*}
    (C_9) \ f_1(\mathbf{P}) - f_2(\mathbf{P}) \geq \lambda^* q(\mathbf{P}).
\end{equation*}
Since $\mathbf{P}_k[d]^T\mathbf{P}_k[d]$ is nonlinear, we need to further transform constraint $C_8$. We know that when the iterative approach in Eq.~(\ref{eq:iterative_sdp}) converges, $e^{(t)}_k[d]$ approaches to zero and $rank(\mathbf{U}^{(t)}_k[d]) \leq n_b$. In~\cite{b22}, it has been shown that when $rank(\mathbf{U}^{(t)}_k[d]) \leq n_b$, we have $rank(\mathbf{Z}_k[d] - \mathbf{P}_k[d]\mathbf{P}_k[d]^T) = 0$, i.e., $\mathbf{Z}_k[d] = \mathbf{P}_k[d]\mathbf{P}_k[d]^T$. Since the non-zero eigenvalues of $\mathbf{P}_k[d]^T\mathbf{P}_k[d]$ are also the eigenvalues of $\mathbf{P}_k[d]\mathbf{P}_k[d]^T$, we can replace $\mathbf{P}_k[d]^T\mathbf{P}_k[d]$ with $\mathbf{Z}_k[d]$ in constraint $C_8$. Let $\mathbf{Z}_k[d]_{(i,j)}$ and $\mathbf{P}_k[d]_{(i,j)}$ be the elements of index $(i,j)$ in $\mathbf{Z}_k[d]$ and $\mathbf{P}_k[d]$, respectively. To handle the nonlinearity of $\eta_k[d]\mu$ in $C_8$, we apply variable substitution as $\nu=1/\mu$, $\mathbf{\tilde{Z}}_k[d]_{(i,j)}=\mathbf{Z}_k[d]_{(i,j)}/\mu$, $\tilde{P}_{ij}=P_{ij}/\mu$ and $\tilde{e}^{(t)}=e^{(t)}/\mu$,  we can rewrite $C_8$ as
\begin{equation*}
   (C_8^*) \ 
 \mathbf{I}_{n_b} \prec \mathbf{\tilde{Z}}^{(t)}_k[d] \prec \eta_k[d]\mathbf{I}_{n_b}, \forall k, d.
\end{equation*}
Following the variable substitution, we rewrite the constraints $C_4, C_5, C_6, C_9$ as $C_4^*, C_5^*, C_6^*, C_9^*$. The problem $\mathcal{P}3.2$ can then be solved by iteratively solving the following problem:
\begin{subequations}\label{eq:iterative_sdp_p3}
\begin{align}
\min\limits_{\substack{\mathbf{\tilde{P}}^{(t)}, \mathbf{\tilde{Z}}^{(t)}, \\ \bm{\tilde{e}}^{(t)}, \bm{\eta}, \nu}} &  \sum_{k\in\mathcal{K}}\sum_{d\in\mathcal{RB}_k} \eta_k[d] + w^{(t)}\sum_{k\in\mathcal{K}}\sum_{d\in\mathcal{RB}_k} \tilde{e}^{(t)}_k[d]\\
\text{s.t.}\quad 
& C_4^*, C_5^*, C_6^*, C_8^*, C_9^* \\
&\tilde{R}^{(t-1)}_k[d]^{T}\mathbf{\tilde{Z}}^{t}_k[d]\tilde{R}^{(t-1)}_k[d] > 0, \\
& \tilde{e}^{(t)}_k[d]\mathbf{I}_{n_b} - \mathbf{\tilde{W}}^{(t-1)}_k[d]^T\mathbf{\tilde{U}}^{(t)}_k[d]\mathbf{\tilde{W}}^{(t-1)}_k[d]\succeq 0,  
\label{eq:eigenvalue} \\
& 0 \leq \tilde{e}^{(t)}_k[d] \leq \tilde{e}^{(t-1)}_k[d], \\
& \mathbf{\tilde{Z}}^{(t)}_k[d] \succeq \boldsymbol{0}, \mathbf{\tilde{U}}^{(t)}_k[d] \succeq \boldsymbol{0}, \nu > 0,
\end{align}
\end{subequations}
where $\mathbf{\tilde{U}}^{(t)}_k[d] = \left[ \begin{matrix}
 \nu\mathbf{I}_{n_b} & {\mathbf{\tilde{P}}^{(t)}}_k[d]^T \\
 \mathbf{\tilde{P}}^{(t)}_k[d] & \mathbf{\tilde{Z}}^{(t)}_k[d] \\
  \end{matrix} \right]$. $\tilde{R}^{(t-1)}[d]$ and $\mathbf{\tilde{W}}^{(t-1)}_k[d]$ are computed from $\mathbf{\tilde{Z}}^{(t-1)}_k[d]$ and $\mathbf{\tilde{U}}^{(t-1)}_k[d]$ following the same way as that used to compute $R^{(t-1)}$ and $\mathbf{W}^{(t-1)}_k[d]$ from $\mathbf{Z}^{(t-1)}_k[d]$ and $\mathbf{U}^{(t-1)}_k[d]$, respectively. 

\subsection{Convergence Analysis}
Our iterative approach to solve $\mathcal{P}2$ consists of an outer loop for iterative EE approximation and an inner loop for the convex SOCP problem. To ensure the convergence of the entire algorithm, it is necessary to prove the convergence of both loops. By adopting the convergence analysis in~\cite{multi_numerology_b5g}, it can be proved that the outer loop of $\mathcal{P}2^*$ converges to an optimal candidate solution $\lambda$. Let $(\boldsymbol{P}^{(t)},\boldsymbol{\delta}^{(t)},\boldsymbol{G}^{(t)})$ be the solution of $\mathcal{P}2^*$ at the $t$-th inner loop iteration, we can see that this solution is still feasible at the $(t+1)$-th inner loop iteration. Since the objective of $\mathcal{P}2^*$ is concave, we have that the inner loop of $\mathcal{P}2^*$ is convergent. We then want to prove the iterative approach for $\mathcal{P}3$ converges. Specifically, we prove that if there exists a solution at the $(t-1)$-th iteration, a solution will also exist at the $t$-th iteration. Let $(\mathbf{\tilde{U}}^{(t)}, \tilde{e}^{(t)})$ be the solution to Eq. (\ref{eq:iterative_sdp_p3}) at the $t$-th iteration. Then, $(\mathbf{\tilde{U}}^{(t)}, \tilde{e}_k^{(t)}[d])$ satisfies Eq.~(\ref{eq:eigenvalue}) for all $k$'s, i.e., $\tilde{e}_k^{(t)}[d] \geq \lambda_{k,1} =  \lambda_{\max}\left(\mathbf{\tilde{W}}^{(t-1)}_k[d]^T\mathbf{\tilde{U}}^{(t)}_k[d]\mathbf{\tilde{W}}^{(t-1)}_k[d]\right), \forall k$. If we can prove that $\tilde{e}^{(t)}_k[d]\mathbf{I}_{n_b} - \mathbf{\tilde{W}}^{(t)}_k[d]^T\mathbf{\tilde{U}}^{(t)}_k[d]\mathbf{\tilde{W}}^{(t)}_k[d]\succeq 0$ is true for all $k$'s, i.e., $\tilde{e}_k^{(t)} \geq \lambda_{\max}\left(\mathbf{\tilde{W}}^{(t)}_k[d]^T\mathbf{\tilde{U}}^{(t)}_k[d]\mathbf{\tilde{W}}^{(t)}_k[d]\right)$, Eq. (\ref{eq:iterative_sdp_p3}) will have at least a solution $(\mathbf{\tilde{U}}^{(t)}, \tilde{e}^{(t)})$ at the $(t+1)$-th iteration. Let the $n_b$-th smallest eigenvalue of $\mathbf{\tilde{U}}^{(t)}[d]$ be $\lambda_{k,2} =  \lambda_{\max}\left(\mathbf{\tilde{W}}^{(t)}_k[d]^T\mathbf{\tilde{U}}^{(t)}_k[d]\mathbf{\tilde{W}}^{(t)}_k[d]\right)$. We next prove that $\lambda_{k,1} \geq \lambda_{k,2}$. Recall that $\mathbf{W}_k^{(t)}[d]$ includes the eignvectors corresponding to the $n_b$ smallest eigenvalues of $\mathbf{U}_k^{(t)}[d]$. It has been shown in \cite{b22} that for $\mathbf{W}^T\mathbf{W} = \mathbf{I}_{n_b}$, $\lambda_{\max}\left(\mathbf{W}^T\mathbf{\tilde{U}}^{(t)}\mathbf{W}\right) \geq \lambda_{k,2}$. Since $\left(\mathbf{W}_k^{(t-1)}[d]\right)^T\mathbf{W}_k^{(t-1)}[d] = \mathbf{I}_{n_b}$, we have $\lambda_{k,1} \geq \lambda_{k,2}$. At the first iteration ($t = 1$), the pair $(\mathbf{\tilde{U}}^{(0)},\tilde{e}^{(0)})$ is a feasible solution, with $\tilde{e}^{(0)}$ being the $(N_{RB}+2)$-th largest eigenvalue of $\mathbf{\tilde{U}}^{(0)}$. Since Eq. (\ref{eq:iterative_sdp_p3}) is solvable at each iteration and $\tilde{e}_k^{(t)}[d] \leq \tilde{e}_k^{(t-1)}[d]$, the iterative approach converges when $\tilde{e}_k^{(t)}[d]$ approaches to zero.

\subsection{Complexity Analysis}

Let $Z_k$ and $D_k$ denote the total number of UEs and RBs in BS $k$, respectively, and $K$ be the total number of BSs. Suppose that we consider a scheduling period with $L$ RB blocks. For $\mathcal{P}2$, we have $3L\sum\limits_{k\in\mathcal{K}} Z_kD_k$ real-valued variables, $L\sum\limits_{k\in\mathcal{K}} Z_kD_k$ SOCP constraints with a dimension of 4, and $L(3\sum\limits_{k \in \mathcal{K}}Z_kD_k + \sum\limits_{k \in \mathcal{K}}D_k + \sum\limits_{k \in \mathcal{K}}Z_k + K)$ linear constraints. Consequently, following the complexity analysis in \cite{complexity}, the worst-case computational complexity of solving $\mathcal{P}2$ using an interior-point method can be expressed as $\mathcal{O}((L(4\sum\limits_{k \in \mathcal{K}}Z_kD_k + \sum\limits_{k \in \mathcal{K}}D_k + \sum\limits_{k \in \mathcal{K}}Z_k + K))^{0.5}(3L\sum\limits_{k\in\mathcal{K}} Z_kD_k)^2(L(7\sum\limits_{k \in \mathcal{K}}Z_kD_k + \sum\limits_{k \in \mathcal{K}}D_k + \sum\limits_{k \in \mathcal{K}}Z_k + K)))$.

We next discuss the computational cost of subproblems $\mathcal{P}3.1$ and $\mathcal{P}3.2$, which are transformed into SDP problems. The computational complexity of an existing SDP solver using the interior point method is on the order of 
$ O(m(n^2 m + n^3)) $ based on \cite{b22}, where $m$ is the number of linear constraints and $n$ is the dimension of the linear matrix inequality in the SDP problem. For subproblem $\mathcal{P}3.1$, we have $m = \sum\limits_{k \in \mathcal{K}}D_k + 2L\sum\limits_{k \in \mathcal{K}}Z_kD_k + KL$ and $n = 2n_b\sum_{k \in \mathcal{K}}D_k$. The complexity of solving $\mathcal{P}3.1$ is $\mathcal{O}((\sum\limits_{k \in \mathcal{K}}D_k + 2L\sum\limits_{k \in \mathcal{K}}Z_kD_k + KL)((2n_b\sum_{k \in \mathcal{K}}D_k)^2(\sum\limits_{k \in \mathcal{K}}D_k + 2L\sum\limits_{k \in \mathcal{K}}Z_kD_k + KL)+(2n_b\sum_{k \in \mathcal{K}}D_k)^3))$. For subproblem $\mathcal{P}3.2$, we have $m = \sum\limits_{k \in \mathcal{K}}D_k + 2L\sum\limits_{k \in \mathcal{K}}Z_kD_k + KL+1$ and $n = 2n_b\sum_{k \in \mathcal{K}}D_k$, the complexity becomes $\mathcal{O}((\sum\limits_{k \in \mathcal{K}}D_k + 2L\sum\limits_{k \in \mathcal{K}}Z_kD_k + KL+1)((3n_b\sum_{k \in \mathcal{K}}D_k)^2(\sum\limits_{k \in \mathcal{K}}D_k + 2L\sum\limits_{k \in \mathcal{K}}Z_kD_k + KL+1)+(3n_b\sum_{k \in \mathcal{K}}D_k)^3))$. Our iterative approaches solve all subproblems in polynomial time, which greatly reduces the computational cost.

\section{Performance Evaluation}
\label{sec:perf_eval}

\begin{table}[tbp]
\centering  
\caption{Simulation Parameters}
\label{table:param}
\begin{tabular}{cc|cc} 
\hline
\#BS & 3 to 6 & Cell radius & 250m \\
Time slot length & 1ms & Max. BS power & 46dBm\\
Carrier freq. ($f_c$) & 3.5GHz & Noise spectrum & -174dBm/Hz\\
Numerology $i=0,1,2$ & 15$\times2^i$kHz & Max. RB power & 30dBm \\
\hline
\vspace{-2em}
\end{tabular}
\end{table}

\subsection{Simulation Setup}
\label{sec:setup}

We evaluate the performance of our proposed joint optimization for IGE and resource allocation via extensive simulations. The simulation parameters are selected to simulate a typical 5G multi-cell multi-numerology system. Our experimental setup consists of 3 hexagonal cells, each with one BS located in the center. The cell radius is 250 meters and neighboring BSs are 500 meters apart. Each cell reuses the same frequency bands to increase spectrum efficiency and contains 3 randomly located UEs experiencing interference from neighboring cells. To align with the 3GPP standards~\cite{TS38211}, time is divided into time slots of 1ms. Numerology 0 uses the narrowest SCS of 15kHz and each time slot includes 14 OFDM symbols of numerology 0. The number of OFDM symbols in a time slot doubles for numerology 1 using an SCS of 30kHz. For each OFDM symbol, the CP takes 7\% of the symbol time. Each cell adopts a different frame structure using three numerologies, where the portions of RBs using numerologies 0, 1, and 2 are 1/2, 1/4, 1/4 for all BSs but their locations are different across BSs. BSs are synchronized with non-negligible errors. TO is caused by both the synchronization error and the delay difference between multiple paths, which could be larger than the CP. The CFOs between BSs and UEs range between -0.5 and 0.5. 

We use the UMi-Street Canyon NLoS model~\cite{TR38901} to calculate the path loss (PL) as $PL(d)=22.4+35.3\log_{10}(d)+21.3\log_{10}(f_c)$,
where $f_c$ is the carrier frequency and $d$ is the distance. Small-scale fading is considered with the channel gains following an exponential distribution with a unit mean. The  SINR requirements of UEs are chosen randomly and the modulation and coding schemes (MCSs) are determined based on their required SINRs. The maximum transmit powers for BS and RB are 46dBm and 30dBm, respectively. Each RB has 12 subcarriers, all of which use the same transmit power. The square $M$-QAM modulation is used in our experiments with $M$ $\in \{4, 16, 64, 256\}$. BSs randomly select modulated symbols from the constellation to send on each subcarrier.  The simulation parameters are listed in Table \ref{table:param}.

\begin{figure}[!t]
    \centering
        \includegraphics[scale=0.34]{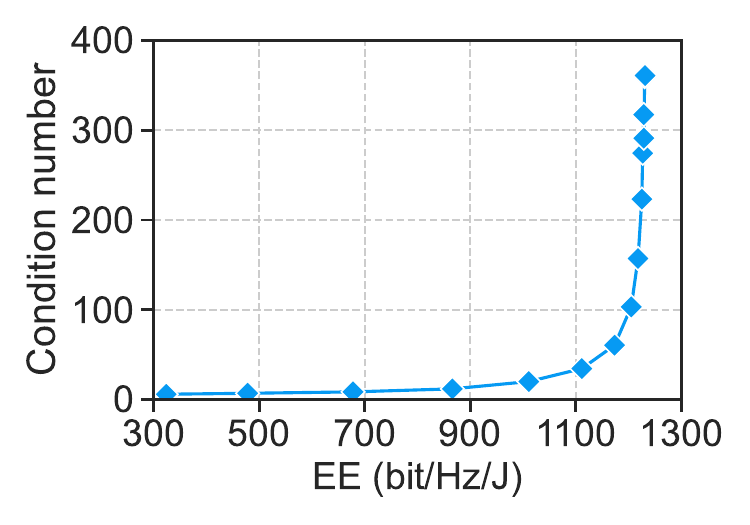}
    \caption{\textcolor{black}{Pareto front between EE and the condition number}}\label{fig:pareto}
\end{figure}

\begin{figure}[t]
    \centering
    \subfigure[Block length]{
        \label{fig:numerlogy_transmission_error}
        \includegraphics[scale=0.33]{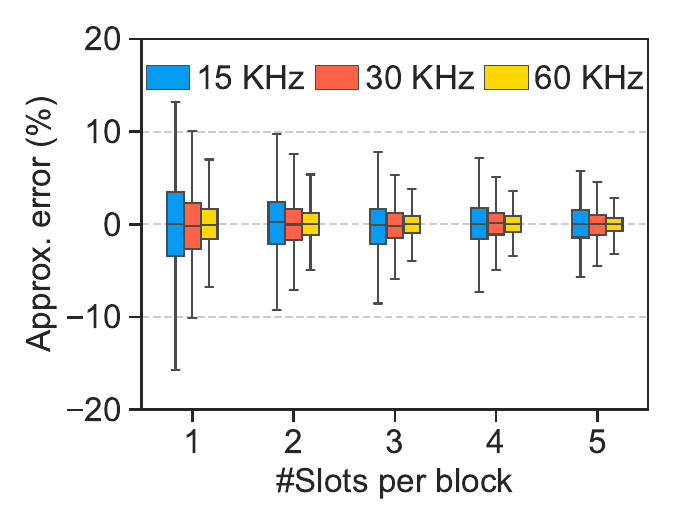}
    }
    \subfigure[Number of interfering RBs]{
        \label{fig:spectral_distance}
        \includegraphics[scale=0.33]{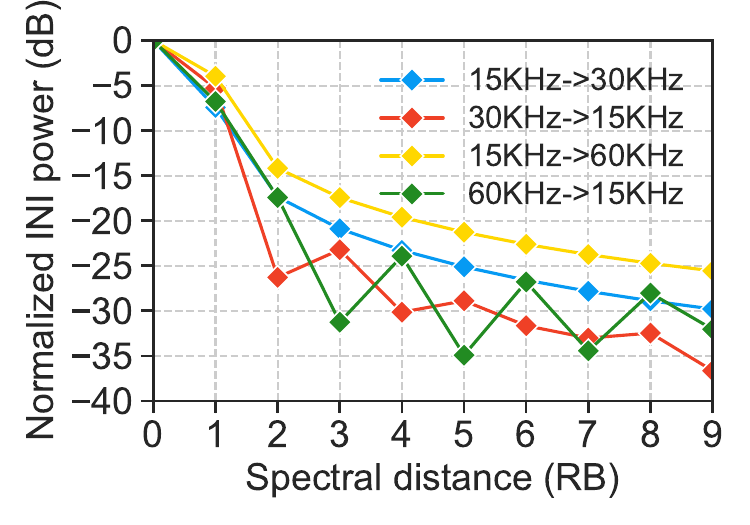}
    }
    \caption{Design parameter selection}
\end{figure}

\begin{figure}[!t]
    \centering
        \includegraphics[scale=0.34]{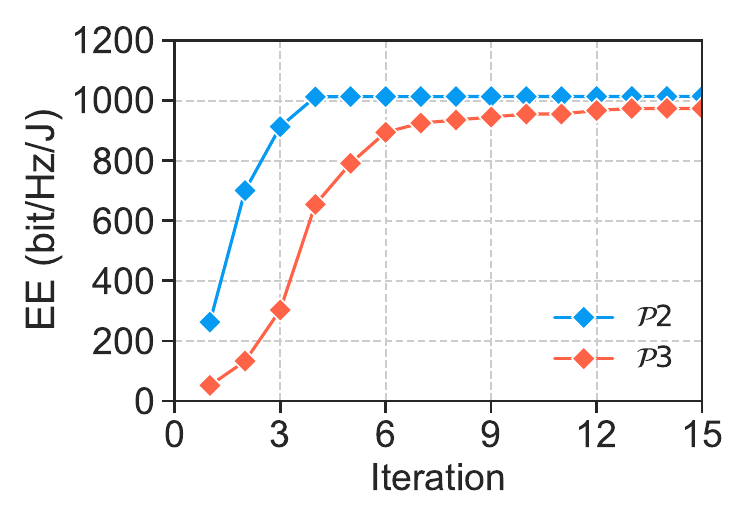}
    \caption{Convergence performance}\label{fig:convergence}
\end{figure}

 \begin{figure*}[t]
    \centering
    \begin{minipage}[t]{.3\textwidth}
        \vspace{0pt}
        \includegraphics[scale=0.36]{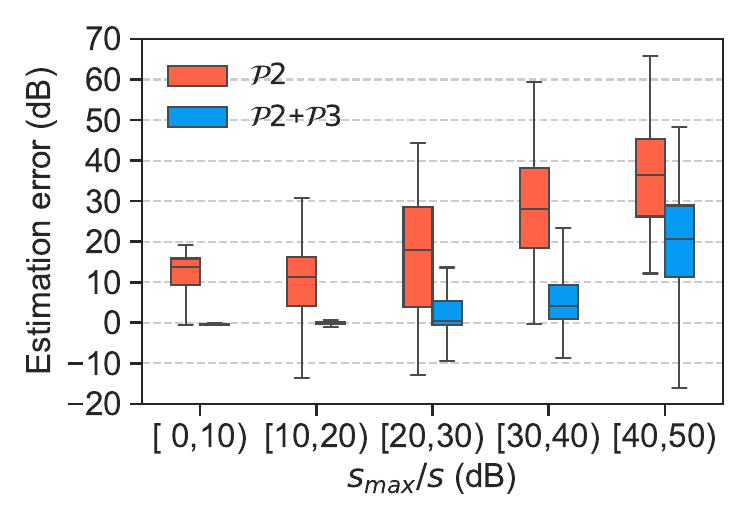}
        \caption{Estimation error}
        \label{fig:serror_vs_db}
    \end{minipage}
    \begin{minipage}[t]{.3\textwidth}
        \vspace{0pt}
        \includegraphics[scale=0.36]{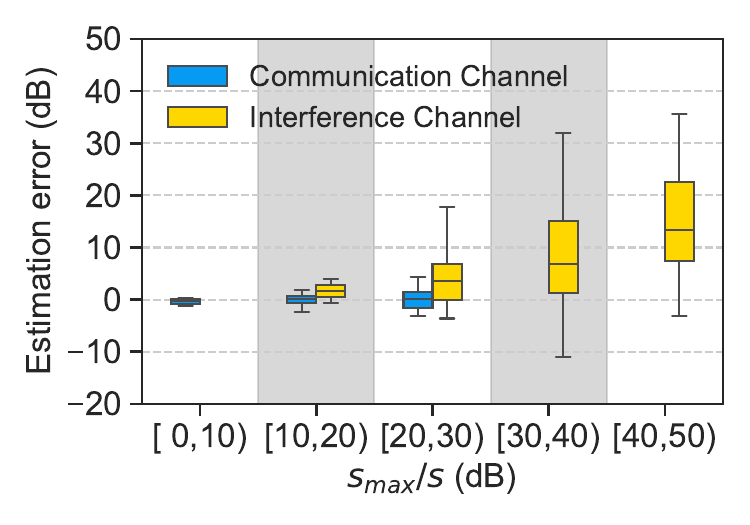}
    \caption{Interference decomposition}\label{fig:interference_link}
    \end{minipage}
    \begin{minipage}[t]{.3\textwidth}
        \vspace{0pt}
        \includegraphics[scale=0.36]{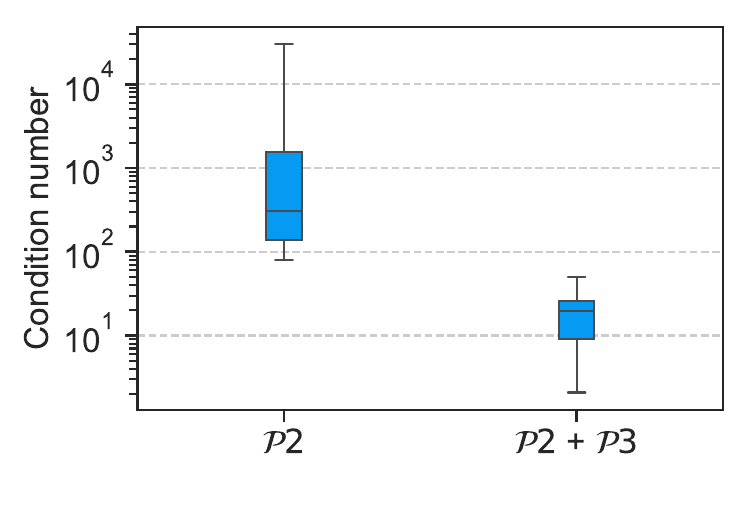}
    \caption{Impact of condition numbers}\label{fig:serror_vs_cn}
    \end{minipage}
\end{figure*}

\subsection{Design Parameter Selection and Convergence}
\label{sec:perf_design_param}


\noindent\textbf{\textcolor{black}{Tradeoff Between EE and the Condition Number.}}
\textcolor{black}{As a multi-objective optimization problem, improving EE would result in a larger condition number. Fig.~\ref{fig:pareto} shows the Pareto front between EE and the condition number. We observe that when EE is high, a small reduction in EE significantly lowers the condition number. Therefore, we prioritize optimizing EE and set $r=0.95$.}

\noindent\textbf{Block Length.}
As indicated by Theorem~\ref{theorem:bennet}, the actual transmit power can approximate the expected transmit power with a sufficient number of samples. However, using more samples per block would increase the time consumption for IGE, slowing down the frequency of interference graph updates. Let $\hat{\mu}_l$ and $\mu_l$ be the actual and expected transmit powers, respectively. Fig.~\ref{fig:numerlogy_transmission_error} shows the approximation error under different block lengths, where the block length is in the unit of slots (each slot = 1ms) and the approximation error is computed as $|\hat{\mu}_l - \mu_l| / \mu_l$. We can see that the approximation error decreases as the block length increases. When the block length is of two slots, the 75-th and 90-th percentiles are 2.5\% and 4.6\%, respectively. We thus set the block length to be of 2 slots.

\noindent\textbf{Scheduling Period Length.}
The scheduling period length is the product of block length and the number of blocks, where the number of blocks is the number of interfering RBs plus one. To speed up the IGE process, for each RB, we only select the RBs that could possibly incur significant interference to it. Since the boundary RBs between two numerologies experience the highest INI due to their proximity to the RBs of a different numerology, we focus on the INI experienced by the boundary RBs. Fig.~\ref{fig:spectral_distance} shows the relation between the INI power and the spectral distance, where the INI power is normalized with respect to the power of the boundary RB and the spectral distance is in the unit of RBs. The RBs with zero spectral distance are the boundary RBs. We can see that the INI power decreases with the spectral distance. When the spectral distance increases to two, the power of INI incurred to a boundary RB is 15 dB less than the power of the boundary RB itself. We therefore only consider the direct neighboring RBs for intra-cell INI. Considering that inter-cell interference is commonly weaker than the intra-cell INI due to the long distance between the interfering BS and the UE, we only consider the interference from the co-channel RBs. 


\noindent\textbf{Convergence.}
We evaluate the convergence performance of our proposed approach under random topologies, where UEs are randomly located. Fig. \ref{fig:convergence} shows the convergence process of our approach for both subproblems under a random topology. The SOCP approach for subproblem $\mathcal{P}2$ takes less than 4 iterations to converge and the SDP approach for subproblem $\mathcal{P}3$ takes about 14 iterations to converge. The SDP approach converges slower because $\mathcal{P}3$ solves two subproblems sequentially, with each subproblem including a group of rank constraints. These results indicate that our iterative approach is convergent. 

\subsection{Performance Evaluation}


\noindent\textbf{Power-Domain Channel Gain Estimation.}
Let $s_{(k',d'),(z,d)}$ and $\hat{s}_{(k',d'),(z,d)}$ denote the actual and estimated equivalent channel gains from RB $d'$ of BS $k'$ to RB $d$ of UE $z$, respectively. We use the logarithmic error, calculated as $10\log_{10}\left(\hat{s}_{(k',d'),(z,d)}/s_{(k',d'),(z,d)}\right)$ in dB, to measure the estimation error, where the error is zero when the estimated and actual channel gains are equal. We estimate the equivalent channel gains with Eq. (\ref{eq:least_square}) using the transmit power matrices obtained by solving $\mathcal{P}2$ alone and $\mathcal{P}2$+$\mathcal{P}3$, respectively. Recall that $\mathcal{P}2$ only optimizes the energy efficiency without considering IGE, while $\mathcal{P}2$+$\mathcal{P}3$ solves for both energy efficiency and IGE. Fig. \ref{fig:serror_vs_db} shows the estimation errors for equivalent channel gains of different magnitudes at the RB level, where the magnitude is relative to the maximum equivalent channel gain in the network, denoted as $s_{max}$. It can be seen that the estimation error increases for both $\mathcal{P}2$ and $\mathcal{P}2$+$\mathcal{P}3$ as the equivalent channel gain decreases, and that $\mathcal{P}2$+$\mathcal{P}3$ achieves significantly better estimation errors than $\mathcal{P}2$. This indicates the necessity of considering IGE while optimizing energy efficiency as $\mathcal{P}2$+$\mathcal{P}3$ does. Using the transmit power matrix from $\mathcal{P}2$+$\mathcal{P}3$, we can accurately estimate the interference channel gains 30dB less than $s_{max}$, where the median estimation error is less than 0.5dB. Although weak interference channels are likely to experience larger estimation errors, their influence on the overall interference estimation is very limited due to small channel gains. The interference experienced by UEs can thus be accurately estimated at the RB level. We further decompose the estimated channel gains into communication and interference channels. From Fig.~\ref{fig:interference_link}, we can see that communication channels are mostly strong enough to be accurately estimated, so are the strong interference channels. In contrast, weak interference channels tend to experience larger estimation errors but, as mentioned above, their impact on overall interference is limited.

\noindent\textbf{Impact of the Condition Number.} 
To better understand the objective of minimizing condition numbers in $\mathcal{P}3$, we compare the condition numbers before and after $\mathcal{P}3$ is solved. As shown in Fig.~\ref{fig:serror_vs_cn}, most condition numbers of the transmit power matrices range between 100 and 2000 before solving $\mathcal{P}3$, while after $\mathcal{P}3$ is solved, most condition numbers drop sharply below 30. Recall that the error bound of channel gain estimation is proportional to the condition number as in Eq. (\ref{eq:error_bound}). Decreasing the condition number helps reduce the channel gain estimation error, which aligns with our simulation results.

\begin{figure}[t]
    \centering
    \subfigure[Sync error = 0]{
        \label{fig:to_0}
        \includegraphics[scale=0.32]{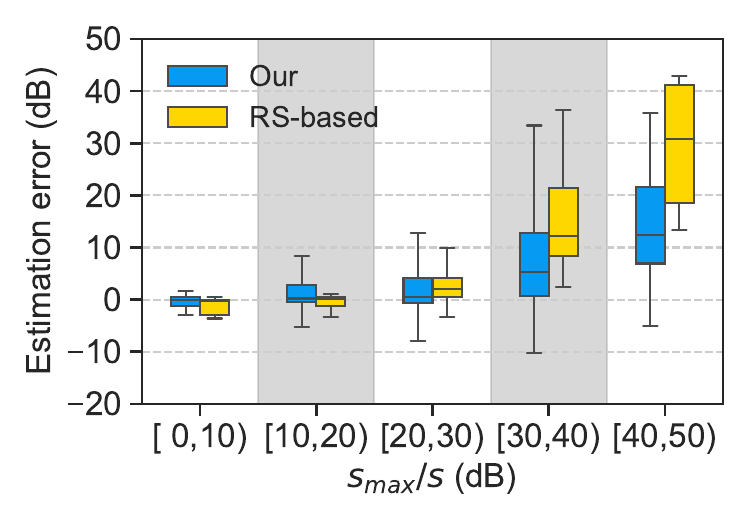}
    }
    \subfigure[Sync error = 2$N^1_{CP}$]{
        \label{fig:to_2}
        \includegraphics[scale=0.32]{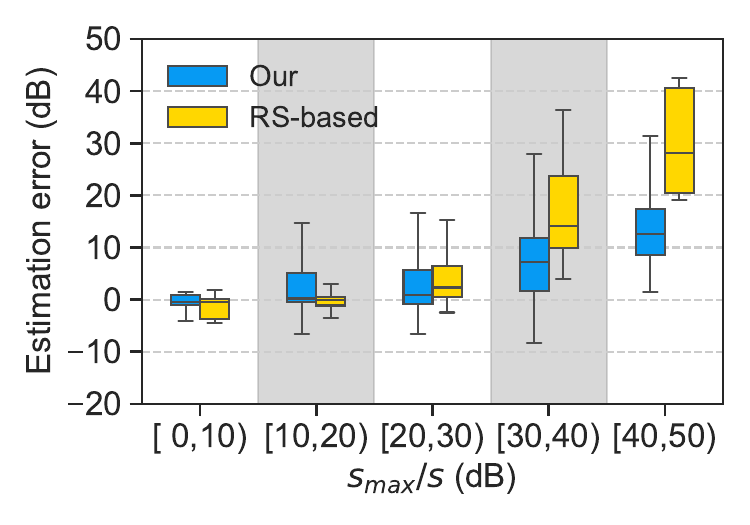}
    }
    \caption{Robustness to timing offset (TO)}
    \label{fig:error_vs_to}
\end{figure}

\noindent\textbf{Robustness to TO and CFO.}
We know that CP is designed to avoid inter-symbol interference within a cell. However, the synchronization error between BSs could cause TO to be greater than the CP length. Fig. \ref{fig:error_vs_to} shows the estimation errors under different synchronization errors, where the synchronization error is normalized by the CP length of numerology 1, denoted as $N^1_{CP}$. It can be seen that our approach can achieve very small estimation errors even when the synchronization error is twice the CP length. The robustness of our approach to TO is because we conduct IGE in the power domain. Moreover, since our approach measures channel gains in the power domain, it can work with multiple numerologies, while the traditional RS-based approach cannot effectively measure interference channels due to the misalignment in both the time and frequency between subcarriers of different numerologies. To demonstrate the robustness of our approach to CFO, we evaluate the estimation error under different CFOs. Fig.~\ref{fig:error_vs_cfo} shows that the estimation errors remain stable for the large channel gains and have a few decibels difference for weak channel gains. As the impact of weak channel gains is small, our approach is also robust to CFOs.



\begin{figure}[!t]
    \centering
    \includegraphics[scale=0.42]{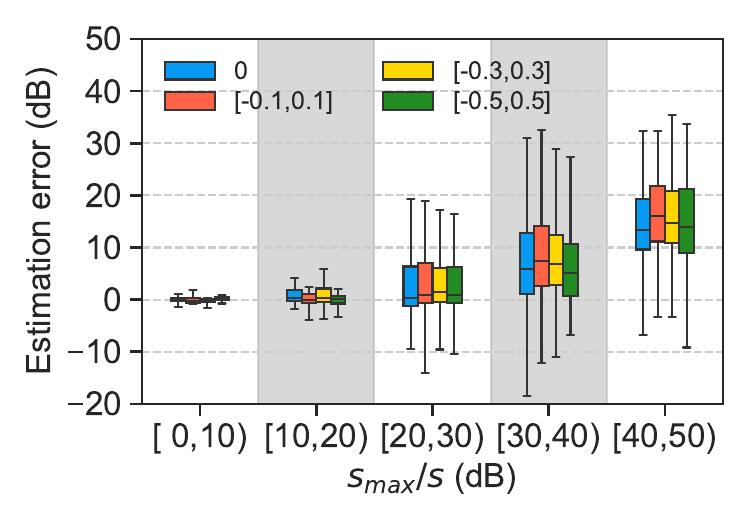}
    \caption{Robustness to CFO}
    \label{fig:error_vs_cfo}
\end{figure}

\noindent\textbf{Comparison with Existing Methods.}
Reference signals (RSs) are used in multi-numerology networks for channel estimation~\cite{rs_multinumerology}. Since the RS-based approach requires the transmitter and receiver to use the same numerology, it is limited to measure interference between RBs of the same numerology. For a fair comparison, we compare our approach and the RS-based approach only on the interference channel gains between RBs of the same numerology. Fig.~\ref{fig:error_vs_to} shows the estimation errors for channel gains of different magnitudes. Our approach outperforms the RS-based approach even when the synchronization error is zero because the reference signals experience significant INI and ICI from other subcarriers. In contrast, since our approach considers both INI and ICI channels, we can achieve higher accuracy for estimated channel gains. Our approach is robust to TO as can been seen in Fig.~\ref{fig:to_2}, where the estimation errors only increase slightly when the synchronization error is twice the CP length. The impact of TO on the RS-based approach is small because the power of INI dominates the power variations introduced by TO. We further compare our approach with the model-based approach that estimates interference with analytical expressions. We assume that the model-based approach can accurately estimate the path loss but is not adaptive to small-scale fading. Fig.~\ref{fig:model_based} shows the estimation errors between our and the model-based approaches. Since the major source of error for the model-based approach is small-scale fading, the estimation errors are consistently large for both strong and weak channel gains. Incorporating small-scale fading in model-based approaches may mitigate the estimation error, but no existing approaches are effective in measuring interference channel gains in multi-numerology networks.

\begin{figure}[!t]
    \centering
    \includegraphics[scale=0.42]{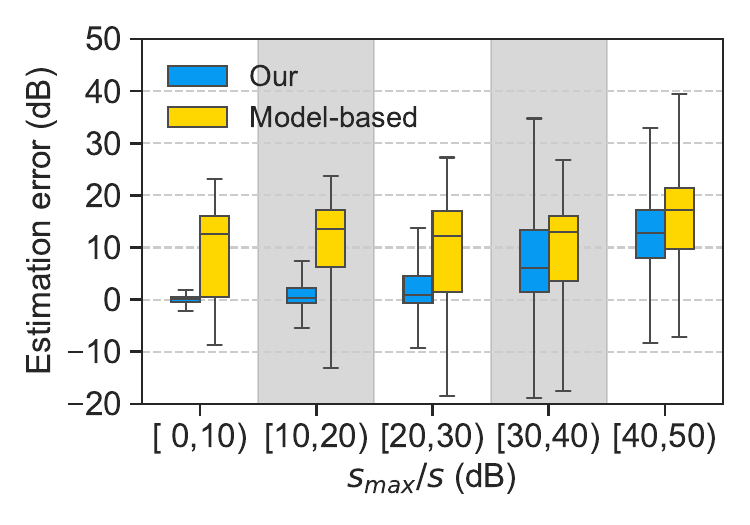}
    \caption{Comparison with the model-based approach}
    \label{fig:model_based}
\end{figure}

\noindent\textbf{Energy Efficiency Loss.}
As seen above, $\mathcal{P}3$ can optimize the condition number for the accuracy of IGE, but it also incurs energy efficiency loss. We want to evaluate energy efficiency loss under different network sizes, where the network topology is formed by one center BS surrounded by other BSs and the network size is controlled by the number of surrounded BSs. Fig.~\ref{fig:ee_decrease} shows that the median energy efficiency is about 5\% when the number of BSs is 3 and decreases to less than 3\% when there are 6 BSs, because the power overhead for IGE increases slower than the total power consumption of the network.

\section{Discussions and Limitations}

\noindent\textbf{Compatibility with the Existing Standards.}
In 3GPP standards~\cite{TR38521}, UEs may report channel status information (CSI) to BSs to improve channel quality, where the CSI is typically quantized to reduce reporting overhead. Our approach requires UEs to report the average receive power, which is not a typical CSI metric. Further, if quantized, the receive power may lose precision, affecting the accuracy of IGE. Although our approach saves frequency-time resources by performing interference channel estimation in the power domain, integrating it into standards would require adding new CSI metrics and choosing a proper quantization.

\noindent\textbf{Fine-Grained and Block-level Power Control.}
The transmit powers of BSs obtained by solving the joint optimization problem are continuous values, while the transmit power in real systems may be discretized for easy implementation. Moreover, constructing a full-rank matrix becomes more challenging when using discrete-valued transmit powers. Further, to approximate expected transmit power, we enforce RBs contiguous in time to be used for the same user. This aligns with the transmission time interval (TTI) that specifies the smallest unit of time to schedule a user~\cite{5g_slicing}. However, low-latency application may require the TTI to be a mini-slot, which is not long enough to well approximate the expected transmit power.

\noindent\textbf{Channel Gain Estimation for Resource Allocation.}
Our power-domain approach for IGE estimates only the magnitude of the channel gains. Consequently, these estimated channel gains cannot be used for demodulation. Nonetheless, as the channel gains are extensively used to estimate SINR and channel capacity, our power-domain approach for IGE is useful for resource allocation. In particular, our approach is capable of estimating the interference channel gains that are previously difficult to measure in multi-numerology networks.

\section{Conclusion}
\label{sec:conclusion}

In this paper, we proposed a power-domain approach to estimate the interference graph, which is crucial for the resource allocation in multi-cell multi-numerology systems. Since our approach estimates the interference graph in power domain, it is robust to timing and carrier frequency offsets. We derived the linear relation between transmit/receive power and interference channel gains. Based on this relation, we provided the necessary condition for the feasibility of interference graph estimation and designed a practical power control scheme. We formulated a joint optimization of interference graph estimation and resource allocation and proposed iterative solutions. Simulation results show that strong interference channel gains can be accurately estimated with low overhead. For future work, we want to extend the joint optimization framework to other wireless networks.

\begin{figure}[!t]
    \centering
    \includegraphics[scale=0.42]{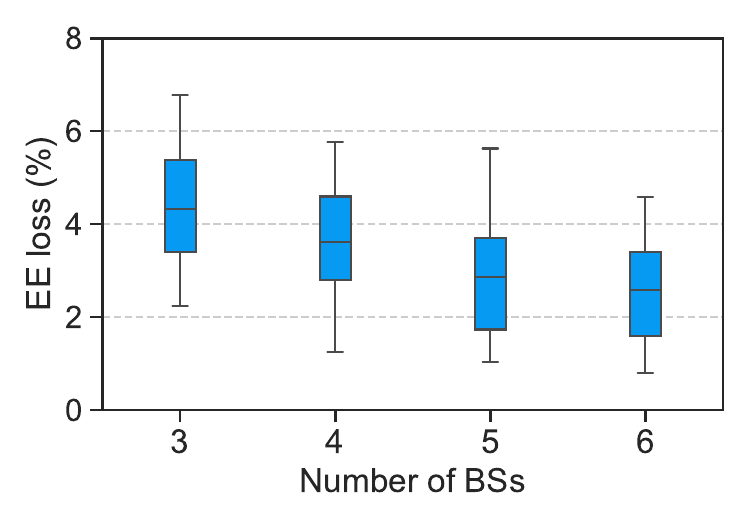}
    \caption{Energy efficiency loss due to IGE}
    \label{fig:ee_decrease}
\end{figure}


\appendices
\section{Proof of Lemma \ref{lemma:linearity_slow_moving}}
\label{section:linearity_slow_moving}

Without loss of generality, we consider a multi-cell network with two numerologies $i$ and $i'$, where numerology $i'$ has a larger SCS than numerology $i$ and $T_{i} = 2^{i'} T_{i'}$. Based on Eq. (\ref{eq:received_signal}) and $x_k[n]= \sum_{i\in\mathcal{I}_k} x^i_k[n]$, the frequency-domain received signal of UE $z$ using numerology $i$ can be expressed as
\begin{equation}\label{eq:dft_on_UE}
Y_z^i[d]=\frac{1}{\sqrt{N^i}}\sum_{k \in \mathcal{K}}\sum_{l=0}^{L-1}\sum_{n=0}^{N^i-1}h_{k,z}^{(l)}  \sum_{i\in\mathcal{I}_k} x^i_k[n-l-\zeta_k]e^{-j\frac{2\pi(\omega_k-d)n}{N^i}}.
\end{equation}
Let $X_{k,l}^{i}[d]$ be the symbol transmitted by BS $k$ on subcarrier $d$ of the $l$-th OFDM symbol in the least common
multiplier (LCM) duration. Since $\mathbb{E}[X_{k,l}^{i}[d]] = 0$ and $X_{k,l}^{i}[d]$'s are independent, we have that $\mathbb{E}[X_{k,l}^{i}[d]X_{k',l'}^{i'}[d']] \not= 0$ only if $i = i'$, $k = k'$, $l = l'$, and $d = d'$.

Let $\Delta_{d,m}^{i' \rightarrow i}=2^{i'} m-d$, and $\Delta_{d,m}^{i\rightarrow i'}=m-2^{i'} d$. From Eq. (\ref{eq:dft_on_UE}), we have that
\begin{align*}
2^{i'}\mathbb{E}[|Y_z^{i'}[d]|^2] = 
 &\sum_{k \in \mathcal{K}}\sum_{m=0}^{N^{i'}-1} 2^{i'}\mathbb{E}[|X_k^{i'}[m]|^2]s_{(k,m),(z,d)}^{i'\rightarrow i'}\\
 &+\sum_{k\in \mathcal{K}}\sum_{m=0}^{N^i-1}\mathbb{E}[|X_k^{i}[m]|^2]s_{(k,m),(z,d)}^{i\rightarrow i'}+\hat{V}_z[d],
\end{align*}
where 
\begin{align*}
&s_{(k,m),(z,d)}^{i' \rightarrow i'}=\frac{2^{i'}}{(N^{i'})^2}\left|\sum_{l=0}^{L-1}h_{k,z}^{(l)}\sum_{n=0}^{N^{i'}-1}e^{\frac{j2\pi[(m+\omega_k-d)n-lm]}{N^{i'}}}\right|^2, \\
&s_{(k,m),(z,d)}^{i \rightarrow i'}=\frac{2^{i'}}{N^i N^{i'}}\left|\sum_{l=0}^{L-1}h_{k,z}^{(l)}\sum_{n=0}^{N^{i'}-1}e^{\frac{j2\pi[(\Delta_{d,m}^{i \rightarrow i'}+2^{i'}\omega_k)n-lm]}{N^i}}\right|^2.
\end{align*}

Since $X^{i}_{k,l}[d]$'s are independent, we can easily extend the above derivation to cases with more than two numerologies and have that
\begin{equation*}
2^i\mathbb{E}[|Y_z^i[d]|^2] = 
 \sum_{k \in \mathcal{K}}\sum_{j \in \mathcal{I}_k}\sum_{l=0}^{N^j-1} 2^j\mathbb{E}[|X_k^j[l]|^2]s_{(k,l)(z,d)}^{j\rightarrow i}+\hat{V}_z[d].
\end{equation*}

\section{Proof of Theorem~\ref{theorem:spectral_distance}}
\label{sec:proof_spectral_distance}

Without loss of generality, we consider the case of two numerologies $i$ and $i'$ within a cell. Let $m$ and $d$ be the subcarrier indices of numerology $i$ and $i'$, where $i, i' \in \mathcal{I}$ and $i' > i$. We focus on the interference channels from numerology $i$ to $i'$ and express the spectral distance between subcarriers $m$ and $d$ as $\Delta_{d,m}^{i\rightarrow i'}=m-2^{i'} d$. As detailed in Appendix~\ref{section:linearity_slow_moving}, the equivalent channel gain $s_{(k,m),(z,d)}^{i \rightarrow i'}$ can be expressed as 
\begin{equation*}
s_{(k,m),(z,d)}^{i \rightarrow i'} = \frac{2^{i'}}{N^i N^{i'}} \left|\sum_{l=0}^{L-1}h_{k,z}^{(l)}\sum_{n=0}^{N^{i'}-1}e^{\frac{j2\pi\left[(\Delta_{d,m}^{i\rightarrow i'}+2^{i'}\omega_k)n-lm\right]}{N^i}}\right|^2.
\end{equation*}
According to~\cite{windowed_ofdm}, in weakly or mildly time-dispersive environment, we can approximately have
\begin{equation*}
s_{(k,m),(z,d)}^{i \rightarrow i'} \approx 2^{i'}\frac{\left|H_{k,z}[m]\right|^2}{N^i N^{i'}} \left|\frac{\sin(\frac{\pi (\Delta_{d,m}^{i \rightarrow i'}+2^{i'}\omega_k)N^{i'}}{N^i})}{\sin(\frac{\pi(\Delta_{d,m}^{i \rightarrow i'}+2^{i'}\omega_k)}{N^i})}\right|^2,
\end{equation*}
where $H_{k,z}[m]=\sum_{l=0}^{L-1}h_{k,z}^{(l)}e^{\frac{-j2\pi ml}{N^i}}$. Let $s_0 = s_{(k,m),(z,m)}^{i \rightarrow i'}$ be the equivalent channel gain when $\Delta_{d,m}^{i \rightarrow i'} = 0$. We have that 
\begin{align*}
s_0 \approx \frac{\left|H_{k,z}[m]\right|^2}{\pi^2}\left|\frac{\sin(\pi w_k)}{w_k}\right| \geq \frac{2\left|H_{k,z}[m]\right|^2}{\pi^2},
\end{align*}
where $w_k \in [-0.5, 0.5]$. When $\Delta_{d,m}^{i\rightarrow i'} > 0$, nearby subcarriers have a spectral distance much less than the entire frequency band, i.e., $\Delta_{d,m}^{i \rightarrow i'} \ll N^i$. Using the approxmation that $\sin(x) \approx x$, we rewrite $s_{(k,m),(z,d)}^{i \rightarrow i'}$ as
\begin{align*}
s_{(k,m),(z,d)}^{i \rightarrow i'} &\approx 2^{i'}\frac{\left|H_{k,z}[m]\right|^2N^i}{ \pi^2 N^{i'}} \left|\frac{\sin(\frac{\pi (\Delta_{d,m}^{i \rightarrow i'}+2^{i'}\omega_k)N^{i'}}{N^i})}{\Delta_{d,m}^{i \rightarrow i'}+2^{i'}\omega_k}\right|^2.
\end{align*}
We can approximately bound $s_{(k,m),(z,d)}^{i \rightarrow i'}$ as $\frac{\left|H_{k,z}[m]\right|^2}{\pi^2} \frac{2^{2i'}}{|\Delta_{d,m}^{i \rightarrow i'}|^2}$.
The strength drop with respect to the spectral distance can be expressed as $\frac{s_{(k,m),(z,d)}^{i \rightarrow i'}}{s_0} \leq \frac{2^{2i' - 1}}{|\Delta_{d,m}^{i \rightarrow i'}|^2}$.
When two subcarriers are one RB apart, the spectral distance is $B$, i.e., the number of subcarriers in a RB. We can have 
$s_{(k,m),(z,d)}^{i \rightarrow i'} / s_0 \leq \frac{1}{2B^2}$, which is inversely proportional to the square of the number of subcarriers in a RB. 

\begin{table*}[t]
\centering
\begin{align}\label{eq:s_to}
s_{(k,m),(z,d)}^{i \rightarrow i'}&=\frac{1}{N^i N^{i'}}
\left[\left|\sum_{l=0}^{L-1}h_{k,z}^{(l)}\sum_{n=l+\zeta_k-N_{cp}^{i'}}^{N^{i'}-1}e^{\frac{j2\pi[(\Delta_{d,m}^{(i,i')}+2^{i'}\omega_k)n-lm]}{N^i}}\right|^2 +(2^{i'}-1)\left|\sum_{l=0}^{L-1}h_{k,z}^{(l)}\sum_{n=0}^{N^{i'}-1}e^{\frac{j2\pi[(\Delta_{d,m}^{(i,i')}+2^{i'}\omega_k)n-lm]}{N^i}}\right|^2\right], \nonumber \\
&\approx \frac{|H_{k,z}[m]|^2}{N^i N^{i'}}\left(\left|\frac{\sin(\frac{\pi(\Delta_{d,m}^{(i,i')}+2^{i'}\omega_k)(N^{i'}-(l+\zeta_k)+N^{i'}_{CP})}{N^i})}{\sin(\frac{\pi(\Delta_{d,m}^{(i,i')}+2^{i'}\omega_k)}{N^i})}\right|^2 + (2^{i'}-1)\left|\frac{\sin(\pi(\frac{\Delta_{d,m}^{(i,i')}}{2^{i'}} + w_k))}{\sin(\frac{\pi(\Delta_{d,m}^{(i,i')}+2^{i'}\omega_k)}{N^i})}\right|^2\right)
\end{align}
\hrule
\end{table*}

\section{Impact of Timing Offset on Channel Gain Estimation}
\label{sec:q_vs_s}

Without loss of generality, we consider a two-cell network, where the equivalent channel gains are affected by both INI and ICI. When TO exceeds CP, the equivalent channel gain at the subcarrier level can be written as
\begin{align*}
&q_{(k,m),(z,d)}^{i\rightarrow i'}=\frac{1}{N^i N^{i'}}\left|\sum_{l=0}^{L-1}h_{k,z}^{(l)}\sum_{n=0}^{N_l-1}e^{\frac{j2\pi[(\Delta_{d,m}^{i\rightarrow i'}+2^{i'}\omega_k)n-lm]}{N^i}}\right|^2, \\
&q_{(k,m),(z,d)}^{i' \rightarrow i'}=\frac{1}{N^i N^{i'}}\left|\sum_{l=0}^{L-1}h_{k,z}^{(l)}e^{\frac{-j2\pi lm}{N^{i'}}}\frac{1-e^{\frac{j2\pi(m+\omega-d)N_l}{N^{i'}}}}{1-e^{\frac{j2\pi(m+\omega-d)}{N^{i'}}}}\right|^2,
\end{align*}
where $N_l = l+t_k-N_{cp}^{i'}$.
Following the derivation in Appendix~\ref{sec:proof_spectral_distance}, we can approximately express $q_{(k,m),(z,d)}^{i\rightarrow i'}$ as
\begin{equation}
\begin{aligned}
q_{(k,m),(z,d)}^{i\rightarrow i'} & \approx \frac{|H_{k,z}[m]|}{N^i N^{i'}}\left|\frac{\sin(\frac{(\pi(\Delta_{d,m}^{i \rightarrow i'}+2^{i'}\omega_k)N_l}{N^i})}{\sin(\frac{\pi((\Delta_{d,m}^{(i,i')}+2^{i'}\omega_k)}{N_i})} \right|^2 \nonumber \\
& \approx \frac{|H_{k,z}[m]|N_l}{N^i N^{i'}}.
\end{aligned}
\end{equation}
Similarly, the equivalent channel gains $s_{(k,m), (s,d)}^{i\rightarrow i'}$ can be written as in Eq.~(\ref{eq:s_to}). For co-channel interference channels ($\Delta_{d,m}^{i \rightarrow i'} = 0$), we can approximately have
\begin{equation*}
s_{(k,m), (s,d)}^{i\rightarrow i'} \approx  \frac{|H_{k,z}[m]|^2}{N^i N^{i'}}(N^i - \zeta_k).
\end{equation*}

\bibliographystyle{IEEEtran}
\bibliography{IEEEabrv}

\end{document}